%%%%%%%%%%%%%%%%%%%%%%%%%%%%%%%%%%%%%%%%%%%%%%%%%%%%%%%%%%%%%%%%%%%%%%%%%%%%%%%%
%2345678901234567890123456789012345678901234567890123456789012345678901234567890
%        1         2         3         4         5         6         7         8

\documentclass[letterpaper, 10 pt, conference]{ieeeconf}  % Comment this line out if you need a4paper

\IEEEoverridecommandlockouts                              % This command is only needed if 
                                                          % you want to use the \thanks command

\overrideIEEEmargins                                      % Needed to meet printer requirements.

%In case you encounter the following error:
%Error 1010 The PDF file may be corrupt (unable to open PDF file) OR
%Error 1000 An error occurred while parsing a contents stream. Unable to analyze the PDF file.
%This is a known problem with pdfLaTeX conversion filter. The file cannot be opened with acrobat reader
%Please use one of the alternatives below to circumvent this error by uncommenting one or the other
%\pdfobjcompresslevel=0
%\pdfminorversion=4

% See the \addtolength command later in the file to balance the column lengths
% on the last page of the document

% The following packages can be found on http:\\www.ctan.org
\usepackage{graphics} % for pdf, bitmapped graphics files
\usepackage{epsfig} % for postscript graphics files
\usepackage{mathptmx} % assumes new font selection scheme installed
\usepackage{times} % assumes new font selection scheme installed
\usepackage{amsmath} % assumes amsmath package installed
\usepackage{amssymb}  % assumes amsmath package installed
\usepackage{amsmath}

\usepackage{amsthm}
\usepackage{amscd}
\usepackage{upgreek}
\usepackage{amsfonts}
\usepackage{graphicx}
\usepackage{fancyhdr}
\usepackage{anysize}
\usepackage{mathrsfs}
\usepackage[utf8]{inputenc}
\usepackage{mathtools}
\usepackage{cite}
\newtheoremstyle{theoremdd}% name of the style to be used
{0.75cm}% measure of space to leave above the theorem. E.g.: 3pt
{0.6cm}% measure of space to leave below the theorem. E.g.: 3pt
{\itshape}% name of font to use in the body of the theorem
{}% measure of space to indent
{\bfseries}% name of head font
{}% punctuation between head and body
{ }% space after theorem head; " " = normal interword space
{\thmname{#1}\thmnumber{ #2}\thmnote{ (#3)}:}
\theoremstyle{theoremdd}
\newtheorem{theorem}{Theorem}

\usepackage{scalerel,stackengine}
\stackMath
\newcommand\reallywidehat[1]{%
	\savestack{\tmpbox}{\stretchto{%
			\scaleto{%
				\scalerel*[\widthof{\ensuremath{#1}}]{\kern-.6pt\bigwedge\kern-.6pt}%
				{\rule[-\textheight/2]{1ex}{\textheight}}%WIDTH-LIMITED BIG WEDGE
			}{\textheight}% 
		}{0.5ex}}%
	\stackon[1pt]{#1}{\tmpbox}%
}
\makeatletter
\def\endfigure{\end@float}
\makeatother
\title{\LARGE \bf
A Consensus Algorithm for Second-Order Systems Evolving on Lie Groups 
}

\author{ Akhil B Krishna$^{1}$, Farshad Khorrami$^{1,2}$ and Anthony Tzes$^{1,3}$% <-this % stops a space
\thanks{$^{1}$Center for Artificial Intelligence and Robotics (CAIR), New York University Abu Dhabi (NYUAD), Abu Dhabi, United Arab Emirates (UAE).}
\thanks{$^{2}$Control/Robotics Research Laboratory (CRRL) of Electrical and Computer Engineering Department, New York University, 5 Metrotech Center, Brooklyn, NY, USA.}
\thanks{$^(3)$ Electrical Engineering, New York University Abu Dhabi (NYUAD), Abu Dhabi, United Arab Emirates (UAE).}
\thanks{This paper has been accepted at the IEEE Conference on Decision and Control (CDC 2025), Rio de Janeiro, Brazil. © 2025 IEEE.}
}

\begin{document}

\maketitle
\thispagestyle{empty}
\pagestyle{empty}

%%%%%%%%%%%%%%%%%%%%%%%%%%%%%%%%%%%%%%%%%%%%%%%%%%%%%%%%%%%%%%%%%%%%%%%%%%%%%%%%
\begin{abstract}
	In this paper, a consensus algorithm is proposed for interacting multi-agents, which can be modeled as simple Mechanical Control Systems (MCS) evolving on a general Lie group. The standard Laplacian flow consensus algorithm for double integrator systems evolving on Euclidean spaces is extended to a general Lie group. A tracking error function is defined on a general smooth manifold for measuring the error between the configurations of two interacting agents. The stability of the desired consensus equilibrium is proved using a generalized version of Lyapunov theory and LaSalle's invariance principle applicable for systems evolving on a smooth manifold. The proposed consensus control input requires only the configuration information of the neighboring agents and does not require their velocities and inertia tensors. The design of tracking error function and consensus control inputs are demonstrated through an application of attitude consensus problem for multiple communicating rigid bodies. The consensus algorithm is numerically validated by demonstrating the attitude consensus problem.
\end{abstract}

\section{Introduction}
Consensus algorithms are a category of distributed algorithms for facilitating a collective agreement among a set of agents, with varying initial conditions and incomplete state information of all the agents. These algorithms have been extensively researched for scenarios, where multiple agents operate within a finite-dimensional Euclidean space \cite{olfati2004consensus,ren2008distributed,qin2016recent,ren2005consensus,ren2005survey}. However, the configuration space of the majority of the mechanical, aerospace, and underwater systems cannot be represented by such a space, but rather as a Lie group. A survey of consensus algorithms on nonlinear spaces is documented in \cite{sepulchre2011consensus}. 

A large body of works discussing consensus algorithms for mechanical systems evolving on special examples of matrix Lie groups can be found such as $\mathrm{SE}\left(3\right)$ \cite{sun2017formation,nazari2016decentralized,igarashi2009passivity,dong2013consensus}, $\mathrm{SE}\left(2\right)$ \cite{sarlette2010coordinated,dong2013consensus,mirzaei2017robust} and $\mathrm{SO}\left(3\right)$ \cite{sahoo2014attitude,weng2013coordinated,CHAVAN2020109262,qi2023distributed,zhu2018adaptive}. Recent works on $\mathrm{SO}(3)$ include \cite{tang2022event,deng2021attitude} while for $\mathrm{SE}\left(3\right)$ \cite{lee2022position,thunberg2016consensus} can be traced.  However, these works are not applicable to a general Lie group. There are attempts to generalize these consensus algorithms on $\mathrm{SE}(n), \mathrm{SO}(n)$ to a general Lie group.  A unified geometric framework for coordinated motion on Lie groups is presented in the works of Sarlette et al. \cite{sarlette2008coordination,sarlette2010coordinated} using simple integrator dynamical systems, and cannot be directly applied to mechanical systems. To address this issue, a consensus algorithm for a Lie group with a bi-invariant metric is studied in \cite{seshan2022geometric} assuming double integrator dynamics. However, the stability of the desired consensus equilibrium was only guaranteed for a communication graph having the structure of a line graph. This was addressed in the work of Seshan et al ~\cite{2022arXiv220900345S}, which considers a more general connected graph using a double integrator system emulating a simple mechanical system \cite{FB-ADL:04}. Seshan et al, \cite{2022arXiv220900345S} utilizes a G-polar Morse function for designing the controller, however the existence of such a function on Lie group is still an open question. In addition, Seshan et al, \cite{2022arXiv220900345S} assumes the graph to be non-directed and the general assumption of a bi-invariant Riemannian metric is very strong and restrictive leaving the results in \cite{2022arXiv220900345S} to a very narrow class of systems.

From these studies, it can be concluded that the literature lacks a comprehensive study about consensus algorithms on a general compact Lie group with both non-directed and directed communication graphs. We focus on developing a generalized consensus algorithm for a general connected compact Lie group using a communication structure having the structure of a connected tree graph. The contributions of the present work are:
\begin{enumerate}
	\item The standard second-order Laplacian flow consensus algorithm for Euclidean space is extended for MCS evolving on a general connected Lie group ensuring position and velocity consensus.
	\item The asymptotic stability of the consensus equilibrium for both undirected and directed communication graphs is proven using a generalized version of Lyapunov theory for smooth manifolds.
	\item A consensus controller is proposed so that all agents follow a predefined reference tracking, provided only a subset of agents having the information about the reference trajectory.
	\item The consensus algorithm is derived for the case of attitude consensus of multiple rigid bodies. The results are validated through numerical simulations.
\end{enumerate}
The rest of the paper is organized as follows. %Section \ref{Sec:Mathematical_Preliminaries} presents a basic overview of the differential geometric terminologies and stability notions used in this paper. 
Section \ref{Sec:Consensus_Lie_groups} presents the key results on consensus algorithm on Lie groups, along with the necessary mathematical preliminaries. Section~\ref{Sec:Attitude_Consensus} presents the dynamics behind the attitude consensus problem of multiple rigid bodies, while Section \ref{Sec:Numerical_simulation} discusses the numerical simulation of that problem. The concluding remarks are discussed in the last Section.
\subsection{Notations}
\noindent The paper follows the notations used in \cite{FB-ADL:04}. The set of real numbers is represented by $\mathbb{R}$. The set of non-negative real numbers and the set of strictly positive real numbers are denoted by $\mathbb{R}_{\ge 0}$ and $\mathbb{R}_{> 0}$, respectively. The set of real $n$ vectors is in $\mathbb{R}^n$ and the set of $m \times n$ real matrices in $\mathbb{R}^{m \times n}$. The identity matrix of order $n$ is denoted by $\mathbf{I}_n$. The operations $\operatorname{tr}\left(\cdot\right)$ and $\operatorname{skew}\left(\cdot\right)$ represent the trace and skew symmetric part of a matrix, respectively. 

%\section{Mathematical Preliminaries}\label{Sec:Mathematical_Preliminaries}
\section{Consensus Algorithms on Lie Groups}\label{Sec:Consensus_Lie_groups}

\noindent Consider a smooth manifold $Q$ and let a point $q \in Q$, then $\mathrm{T}_qQ$ represents the tangent space at $q$. Let $X \in \Gamma^{\infty}\left(\mathrm{T}Q\right)$ be a smooth vector field, then the integral curve for $X$ is a curve $\gamma : I \to Q$ at $q_0 \in Q$ satisfying $\gamma'\left(t\right)=X\left(\gamma \left(t\right)\right),~~ t \in I$. Now the flow of $X$ is defined as $\Phi^{X}_t\left(q_0\right)=\gamma\left(t\right)$.  Let for $X \in \Gamma^{\infty}\left(\mathrm{T}Q\right)$ and $f \in C^{\infty}\left(Q\right)$ be a smooth function on Q, then Lie derivative with respect to $X$ and denote by $\mathcal{L}_X : C^{\infty}\left(Q\right)\to \mathbb{R}$ defined as $\mathcal{L}_Xf\left(q\right)=\left\langle \operatorname{d}f\left(q\right);X\left(q\right) \right\rangle$, where $\operatorname{d}f\left(q\right)\in \mathrm{T}^{*}_qQ $ and $\mathrm{T}^{*}_qQ$ represents the dual of $\mathrm{T}_qQ$. Let $\left(G,\star\right)$ be a Lie group where $\star$ represents the binary operation. Let $\mathfrak{g}$ be its Lie algebra and $e$ be the identity element of $G$. The adjoint operator $\operatorname{ad}_\xi : \mathfrak{g} \to \mathfrak{g}$ is defined as $\operatorname{ad}_{\xi}\eta=\left[\xi,\eta\right],~~\forall \xi,\eta \in \mathfrak{g}$, where $\left[ \cdot,\cdot \right]$ is the Lie bracket operation on $\mathfrak{g}$. Next, defining the left translation map $\mathscr{L}_g: G\to G,~~g\in G$ as $\mathscr{L}_g\left(h\right)=g \star h,~~h \in G$. Similarly the right translation map $\mathscr{R}_g: G\to G,~~g\in G $ is defined as $\mathscr{R}_g\left(h\right)=h \star g$. Also given a left-invariant affine connection $\nabla$, there exists a unique bilinear map $B : \mathfrak{g} \times \mathfrak{g} \to \mathfrak{g}$ such that for all $\xi,\eta \in \mathfrak{g}$, we have $\nabla_{\xi_L}\eta_L= \left(B\left(\xi,\eta\right)\right)_L$. The converse of the above statement is also true.   Let $\Psi \in C^\infty \left(Q\right)$ and consider a point $q_0 \in Q$, then $q_0$ is a critical zero for $\Psi$, if $\Psi\left(q_0\right)=0$ and $\operatorname{d}\Psi\left(q_0\right)=0_{q_0}$. If all of the critical points of $\Psi$ are nondegenerate, then $\Psi$ is a Morse function. If $q_0 \in Q$ is a critical zero for a smooth function $\Psi$ and if the Hessian of $\Psi$ is positive-definite at $q_0$, then $\Psi$ is locally positive-definite about $q_0$. Conversely, if $\Psi$ is locally positive-definite about $q_0$, then $q_0$ is a critical zero for $\Psi$ and the Hessian of $\Psi$ is positive-definite at $q_0$. The function $\Psi$ is polar if it has a unique minimum. Let $L \in \mathbb{R}$, $\Psi \in C^{\infty}\left(Q\right)$ and $q_0 \in Q$, then the $L-$sublevel set of $\Psi$ is denoted by $\Psi^{-1}\left(\le L\right)$ and defined as $\Psi^{-1}\left(\left( -\infty,L\right)\right)=\left\lbrace q \in Q~ \vert~ \Psi\left(q\right) \le L \right\rbrace$. In order,  to express the error between two agents, we need to define a configuration error map which takes two configurations as  arguments and returns a scalar value measuring the error between the two configurations. Next, defining a function $C^{\infty}\left(Q\times Q\right) \ni \Psi : Q \times Q \to \mathbb{R}$. The function $\Psi$ is symmetric, if $\Psi\left(q,r\right)=\Psi\left(r,q\right)$ for all $\left(q,r\right) \in Q \times Q$. For a symmetric function $\Psi \in C^{\infty} \left(Q \times Q\right)$, we define $\Psi_r:Q\to \mathbb{R}$ as $\Psi_r\left(q\right)=\Psi\left(q,r\right)$. For such a $\Psi$, we define the differential with respect to first argument $\operatorname{d}_1\Psi$ and second argument $\operatorname{d}_2\Psi$ \cite{FB-ADL:04}. Define a quantity $\tau_L \in \mathbb{R}$ as $\tau_L=	\operatorname{sup}\left\lbrace L \in \mathbb{R} ~\vert~ g\in \Psi^{-1}_G\left(\le L\right)\backslash \left\lbrace e\right\rbrace,\operatorname{d}\Psi_G \left(g\right) \neq 0\right\rbrace$, where $\Psi^{-1}_G\left(\le L\right)$ is a sublevel set containing $e$. Consider $n$ mutually interacting identical agents represented as MCSs, each evolving on a Lie group $G$. Then, the configuration space of the whole system is denoted by $ \begin{array}{cc}
	G^{n} \coloneqq & \underbrace{G \times  \dots \times G} \\
	& n- ~\text{times}
\end{array}$ and the set of body velocities for the whole system is denoted by $ \begin{array}{cc}
	\mathfrak{g}^{n} \coloneqq & \underbrace{\mathfrak{g} \times  \dots \times \mathfrak{g}} \\
	& n- ~\text{times}
\end{array}$. The communication network between $n$ multi-agents can be represented by a graph $\mathcal{G}=\left(\mathcal{V},\mathcal{E}\right)$ where the agents are denoted by the set of nodes $\mathcal{V}=\left\lbrace 1,2,\dots ,n\right\rbrace$ and $\mathcal{E} \subseteq \mathcal{V} \times \mathcal{V}$ denotes the set of directed edges. A directed edge $e_{ji} \in \mathcal{E}$ denotes that the agent $i$ is receiving information from agent $j$. Then, agent $j$ is labeled as a neighbor of agent $i$. The set of neighbors of agent $i$ is denoted by $\mathcal{N}_i$. If $\mathcal{G}$ is undirected then $e_{ji} \Leftrightarrow e_{ij} \in \mathcal{E},~\forall i,j \in \mathcal{V}$. The interaction between multi-agents can be expressed by the adjacency matrix $A=\left[a_{ij}\right]_{n \times n} \in \mathbb{R}^{n \times n}$, where $a_{ij}$ is defined as
%\begin{equation*}
$
a_{ij}=\begin{cases}
	1 & ,~\text{if}~j\in \mathcal{N}_i\\
	0 &,~\text{otherwise}
\end{cases}
$.
%\end{equation*}
The degree of an agent $i$ is the number of edges that connect the agent $i$. If $\mathcal{G}$ is undirected, then $A$ is a symmetric matrix and the degree of agent $i$ is $\delta_i\triangleq \sum^{n}_{j=1}a_{ij}$. In a digraph $\mathcal{G}_{\operatorname{dir}}$ with an edge $e_{ij} \in \mathcal{E}_{\operatorname{dir}}$, then agent $i$ is called an in-neighbor of agent $j$ and $j$ is called the out-neighbor of agent $i$. Let $\mathcal{N}^{\operatorname{in}}_i$ be the set of in-neighbors of $i$-th agent. The adjacency matrix $A_{\operatorname{dir}}=\left[\tilde{a}_{ij}\right]_{n \times n} \in \mathbb{R}^{n \times n}$, where $a_{ij}$ is defined as
%\begin{equation*}
$
\tilde{a}_{ij}=\begin{cases}
	1 & ,~\text{if}~j\in \mathcal{N}^{\operatorname{in}}_i\\
	0 &,~\text{otherwise}
\end{cases}
$. Let the in-degree of $i-$ th agent is $\delta^{\operatorname{in}}_i$ and the out-degree of $i-$ th agent be $\delta^{\operatorname{out}}_i$. In the case of a directed graph the adjacency matrix need not be symmetric%, i.e, it is not guaranteed $\tilde{a}_{ij} =\tilde{a}_{ji}$
.  Let $\gamma_i:\mathbb{R}_{\ge 0}\to G$ be the controlled trajectory of the $i$th agent and its body velocity is represented by the curve $t \mapsto v_i\left(t\right)=\mathrm{T}_{\gamma_i \left(t\right)}\mathscr{L}_{\gamma^{-1}_i\left(t\right)}\left(\gamma'_i\left(t\right)\right) \in \mathfrak{g}$. Then, the equations of motion of the $i$ th agent is given by \cite{FB-ADL:04}:
\begin{equation}\label{Eq:Dynamics_SMS_on_Lie_Group}
	\begin{split}
		\gamma'_i&=\mathrm{T}_e\mathscr{L}_{\gamma_i}\left(v_i\right)\\
		v'_i-\mathbb{I}^{\sharp}\left(\operatorname{ad}^{*}_{v_i}\mathbb{I}^{\flat}\left(v_i\right)\right)&= \mathbb{I}^{\sharp}\left(u_i\right)
	\end{split}
\end{equation}
where the control input $u_i\left(t,\gamma_i\left(t\right),\gamma_j\left(t\right),v_i\left(t\right)\right) \in \mathfrak{g}^{*},~ j \in \mathcal{N}_i$ and $\mathbb{I}$ represents the kinetic energy tensor. The flat map $\mathbb{I}^{\flat}$ and sharp map $\mathbb{I}^{\sharp}$ are defined as $\mathbb{I}^{\flat}:\mathfrak{g} \to \mathfrak{g}^{*}$ and $\mathbb{I}^{\sharp}:\mathfrak{g}^{*} \to \mathfrak{g}$ Let the position and velocity of all agents is represented by $\xi= \left(g_1,\dots,g_n\right)\times \left(v_1,\dots,v_n\right)\in G^{n} \times \mathfrak{g}^{n}$. Next, we consider the problem of achieving consensus in position of agents. The definition of the operator $\left(\cdot \right)^{\sharp}$ is provided earlier%in Section \ref{Sec:Mathematical_Preliminaries}
. The problem objective is:
\vspace*{-2mm}\subsection*{Consensus objective I}
%\title{Consensus objective I}
%
\noindent The consensus objective is to ensure
\begin{equation}\label{Eq:Consensus_Obj_1_Lie_Group}
\begin{split}
	\lim_{t \mapsto \infty} \gamma_i\left(t\right)=\gamma_j\left(t\right),\lim_{t \mapsto \infty}v_i\left(t\right)=0,~\forall i,j\in \mathcal{V}.\\
\end{split}
\end{equation}
 Consider a smooth function $\Psi_G: G\to \mathbb{R}_{\ge 0}$ which is proper, bounded from below satisfying: (1) $\Psi_G\left(e\right)=0$, 
(2) $\operatorname{d}\Psi_G\left(e\right)=0^{*}_e$, and
(3) $\operatorname{Hess}\Psi_G\left(e\right)$ is positive-definite. It can be concluded that $\Psi_G$ is a locally positive-definite function about $e$ and sublevel sets of $\Psi_G$ are compact. The set of critical points of $\Psi_G$, denoted by $\mathcal{S}^{\operatorname{crit}}_G$ is defined as $\mathcal{S}^{\operatorname{crit}}_G =\left\lbrace g \in G ~\vert~\operatorname{d}\Psi_G\left(g\right)=0^{*}_e\right\rbrace$.  Let the right tracking error function $\Psi_{G,r}: G \times G \to \mathbb{R}$ be
$\Psi_{G,r}\left(g_1,g_2\right)=\Psi_G\left(g^{-1}_2\star g_1\right)$. %Since $\Psi_{G,r}$ is a tracking error function, 
$\Psi_{G,r}$ is symmetric and we can obtain $\Psi_{G,r}\left(g_1,g_2\right)=\Psi_G\left(g^{-1}_2\star g_1\right)=\Psi_G\left(g^{-1}_1\star g_2\right)=\Psi_{G,r}\left(g_2,g_1\right)$. Let $\gamma_i\left(t\right), \gamma_j\left(t\right) \in G$ be the configurations of two agents, $i,j \in \mathcal{V}$, then the tracking error function between these two agents is given by $\Psi_{G,r}\left(\gamma_i\left(t\right),\gamma_j\left(t\right)\right)=\Psi_G\left(\gamma^{-1}_j\left(t\right)\star \gamma_i\left(t\right)\right)$. To compare the velocities of two agents, we use the adjoint map. Let $v_i$ and $v_j$ be the velocities of $i-$th and $j-$th agent defined in the Lie algebra $\mathfrak{g}$, then the error in velocity is defined as $v_i-\operatorname{Ad}_{\gamma^{-1}_i\gamma_j}\left(v_j\right)$. The velocity error in $\mathrm{T}G$ as $\dot{e}_{ij} \in \mathrm{T}_{\gamma_i} G$  is defined as $	\dot{e}_{ij}=\mathrm{T}_e\mathscr{L}_{\gamma_i}\left(v_i-\operatorname{Ad}_{\gamma^{-1}_i\gamma_j}\left(v_j\right)\right).$
Based on the definition of $\dot{e}_{ij}$, the time derivative of $\Psi_{G,r}\left(\gamma_i\left(t\right),\gamma_j\left(t\right)\right)$ is
\begin{equation}\label{Eq:Psi_{G,r}_dot_exp}
\begin{split}
	 \mathcal{L}_{X}\Psi_{G,r}\left(\gamma_i,\gamma_j\right)&=\left\langle \left(\mathrm{T}_e\mathscr{L}_{\gamma^{-1}_j\gamma_i}\right)^{*}\left(\operatorname{d}\Psi_{G}\left(\gamma^{-1}_j\gamma_i\right)\right);\right.\\
		&\qquad\left.\mathrm{T}_e\mathscr{L}_{\gamma^{-1}_i}\left(\dot{e}_{ij}\right) \right\rangle.
	\end{split}
\end{equation}	
The suggested control input $u_i\left(t\right)$ for the $i-$th agent is 
\begin{equation}\label{Eq:Control_input_Consensus_Lie_group}
\begin{split}
		\small u_i &=-\operatorname{ad}^{*}_{v_i}\mathbb{I}^{\flat}\left(v_i\right)-\frac{1}{\alpha_i} \left(
		R^{\flat}_{\operatorname{diss}}\left(v_i\right) \right.\\
		&\quad\left.+K_{p} \sum_{j=1}^{n}a_{ij}\left(\mathrm{T}_e \mathscr{L}_{\gamma_i}\right)^{*}\left(\operatorname{d}_i \Psi_{G,r}\left(\gamma_i,\gamma_j\right)\right)\right)
	\end{split}
\end{equation}
where  $K_p,\alpha_i \in \mathbb{R}_{>0}$ and $R_{\operatorname{diss}}:\mathfrak{g} \times \mathfrak{g} \to \mathbb{R}_{\ge 0}$ is a positive semi-definite bilinear mapping . Substituting the control input $u_i$ in \eqref{Eq:Dynamics_SMS_on_Lie_Group}, the closed-loop dynamics on $\mathrm{T}G^{n}$ is
\begin{equation}\label{Eq:Consensus_Closed_loop_Dynamics_Lie_group}
	\begin{split}
		\gamma'_i&=\mathrm{T}_e\mathscr{L}_{\gamma_i}\left(v_i\right)\\
		v'_i&=-\frac{\mathbb{I}^{\sharp}}{\alpha_i}\left(K_{p} \sum_{j=1}^{n}a_{ij}\left(\mathrm{T}_e \mathscr{L}_{\gamma_i}\right)^{*}\left(\operatorname{d}_i \Psi_{G,r}\left(\gamma_i,\gamma_j\right)\right)+R^{\flat}_{\operatorname{diss}}\left(v_i\right)\right).
	\end{split}
\end{equation}
The equilibrium set of the closed-loop dynamics is
\begin{equation}\label{Eq:Eq_set_Lie_group}
	\small \begin{split}
		&\mathcal{E}^{1}_G=\left\lbrace\xi \in G^{n} \times \mathfrak{g}^{n} \vert
	\sum_{j=1}^{n}a_{ij}\operatorname{d}_i\Psi_{G,r}\left(g_i,g_j\right)=0^{*}_{g_i},v_i=0, \forall i\in \mathcal{V} \right\rbrace.
	\end{split}
\end{equation}
The desired consensus equilibrium is 
\begin{equation}
\begin{split}
	\mathcal{E}^{*,1}_G&=\left\lbrace \xi \in G^{n} \times \mathfrak{g}^{n} ~\vert~
g_1=\dots=g_n,v_1=\dots=v_n=0\right\rbrace.
\end{split}
\end{equation}
Similarly, there exists undesired consensus equilibria defined as $\mathcal{\bar{E}}^{1}_G=\mathcal{E}^{1}_G  \backslash \mathcal{E}^{*,1}_G$. The undesired consensus equilibria consists of trivial $\mathcal{\bar{E}}^{\operatorname{triv},1}_G$ and nontrivial equilibria $\mathcal{\bar{E}}^{\operatorname{nontriv},1}_G$ are
\begin{equation}
\begin{split}
	&\mathcal{\bar{E}}^{\operatorname{triv},1}_G=\left\lbrace \xi \in \mathcal{E}^{\operatorname{des},1}_G \vert
 g^{-1}_j  \star g_i \in \mathcal{S}^{\operatorname{crit}}_G,v_i=0,\forall i,j \in \mathcal{V}\right\rbrace.
\end{split}
\end{equation}
and $\mathcal{\bar{E}}^{\operatorname{nontriv},1}_G=\mathcal{\bar{E}}^{1}_G \backslash \mathcal{\bar{E}}^{\operatorname{triv},1}_G $. The next theorem establishes the asymptotic stability of the consensus equilibrium $	\mathcal{E}^{*,1}_G$ under the control input $u_i$ given in \eqref{Eq:Control_input_Consensus_Lie_group}.
\vspace*{-2mm}
\begin{theorem}\label{Th:Consensus_Lie_group_Obj_1}
	Let the communication graph $\mathcal{G}$ be connected tree and the initial conditions lie in the set 
\begin{equation}
\small 	\begin{split}
	&\mathcal{S}^{1}=\left\lbrace\xi \in G^n \times \mathfrak{g}^n ~\vert~ 
	\frac{K_{p}}{2} \sum_{i=1}^{n}\sum_{j=1}^{n}a_{ij}\Psi_{G,r}\left(g_i\left(0\right),g_j\left(0\right)\right)\right.\\
&\left.+\frac{1}{2}\sum_{i=1}^{n}\alpha_i \left\langle \left\langle v_i\left(0\right),v_i \left(0\right)\right\rangle \right\rangle_{\mathbb{I}} \le \frac{K_p}{2}\tau_L\right\rbrace.
\end{split}
\end{equation}
Then the proposed control input $u_i\left(t\right)$ given in \eqref{Eq:Control_input_Consensus_Lie_group} renders the consensus equilibrium $\mathcal{E}^{\operatorname{des},1}_G$ locally asymptotically stable.
\end{theorem}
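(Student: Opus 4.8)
The plan is to use the total mechanical energy of the closed-loop system as a Lyapunov function and then invoke LaSalle's invariance principle for manifolds. Concretely, I would take
\begin{equation*}
V(\xi)=\frac{K_p}{2}\sum_{i=1}^{n}\sum_{j=1}^{n}a_{ij}\Psi_{G,r}(g_i,g_j)+\frac{1}{2}\sum_{i=1}^{n}\alpha_i\langle\langle v_i,v_i\rangle\rangle_{\mathbb{I}},
\end{equation*}
which is precisely the function whose $\tfrac{K_p}{2}\tau_L$-sublevel set defines $\mathcal{S}^{1}$. First I would verify that $V$ is a legitimate candidate. Since $\Psi_G$ is proper, bounded below, vanishes at $e$, has $\operatorname{d}\Psi_G(e)=0^{*}_e$ and positive-definite $\operatorname{Hess}\Psi_G(e)$, the tracking function $\Psi_{G,r}$ is non-negative and locally positive-definite about the diagonal $g_i=g_j$; the kinetic term is positive-definite in the velocities with respect to $\langle\langle\cdot,\cdot\rangle\rangle_{\mathbb{I}}$. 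Hence $V\ge 0$, $V$ vanishes exactly on $\mathcal{E}^{*,1}_G$, and by properness together with compactness of the sublevel sets of $\Psi_G$, the set $\mathcal{S}^{1}$ is compact and (once $\dot V\le 0$ is established) positively invariant.

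The second step is the derivative computation along \eqref{Eq:Consensus_Closed_loop_Dynamics_Lie_group}. Differentiating the kinetic term and using the elementary identity $\langle\mathbb{I}^{\flat}(v_i);\mathbb{I}^{\sharp}(\mu)\rangle=\langle\mu;v_i\rangle$ for every $\mu\in\mathfrak{g}^{*}$, together with symmetry of the kinetic metric, yields
\begin{equation*}
\frac{d}{dt}\Big[\tfrac{1}{2}\textstyle\sum_i\alpha_i\langle\langle v_i,v_i\rangle\rangle_{\mathbb{I}}\Big]=-\sum_{i}\big\langle P_i;v_i\big\rangle-\sum_{i}R_{\operatorname{diss}}(v_i,v_i),
\end{equation*}
where $P_i=K_p\sum_j a_{ij}(\mathrm{T}_e\mathscr{L}_{\gamma_i})^{*}(\operatorname{d}_i\Psi_{G,r}(\gamma_i,\gamma_j))$ is the potential force. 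For the potential term I would use the Lie-derivative expression for $\mathcal{L}_X\Psi_{G,r}$ derived above, the symmetry $\Psi_{G,r}(g_i,g_j)=\Psi_{G,r}(g_j,g_i)$, and $a_{ij}=a_{ji}$; after relabeling indices the first- and second-argument differentials combine, the factor $\tfrac{1}{2}$ is absorbed, and one obtains $\tfrac{d}{dt}[\text{potential}]=+\sum_i\langle P_i;v_i\rangle$. The two contributions cancel, leaving $\dot V=-\sum_i R_{\operatorname{diss}}(v_i,v_i)\le 0$ by positive semi-definiteness of $R_{\operatorname{diss}}$.

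The third step applies LaSalle's principle on the compact, positively invariant set $\mathcal{S}^{1}$: every trajectory converges to the largest invariant subset of $\{\dot V=0\}=\{R_{\operatorname{diss}}(v_i,v_i)=0,\ \forall i\}$. On this invariant set each $v_i(t)$ must remain in $\ker R^{\flat}_{\operatorname{diss}}$ for all time, so $\dot v_i$ preserves that subspace; feeding this back into \eqref{Eq:Consensus_Closed_loop_Dynamics_Lie_group} and using invariance forces $v_i\equiv 0$, and consequently $P_i=0$ for every $i$. Thus the limit trajectory lies in the equilibrium set $\mathcal{E}^{1}_G$ of \eqref{Eq:Eq_set_Lie_group}, which already delivers the velocity consensus $v_1=\dots=v_n=0$.

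The final and most delicate step—and the one I expect to be the main obstacle—is to show that within $\mathcal{S}^{1}$ the balance condition $\sum_j a_{ij}\operatorname{d}_i\Psi_{G,r}(g_i,g_j)=0$ forces $g_1=\dots=g_n$, thereby excluding $\bar{\mathcal{E}}^{1}_G$. This is exactly where the sublevel threshold $\tau_L$ and the tree hypothesis are needed. The constraint $V\le\tfrac{K_p}{2}\tau_L$ keeps $\Psi_G(g_j^{-1}\star g_i)\le\tau_L$ along each edge, so by the very definition of $\tau_L$ the only critical point of $\Psi_G$ accessible in this range is $e$; hence $\operatorname{d}\Psi_G(g_j^{-1}\star g_i)=0$ implies $g_i=g_j$. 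The tree structure then permits a leaf-to-root induction: a leaf $i$ has a single neighbor $j$, so its equilibrium condition collapses to $\operatorname{d}_i\Psi_{G,r}(g_i,g_j)=0$, which forces $g_i=g_j$; deleting the leaf leaves a smaller tree and the argument repeats across all $n-1$ edges. Because a tree contains no cycles, there are no balanced ``frustrated'' configurations, so no element of $\bar{\mathcal{E}}^{1}_G$ survives inside $\mathcal{S}^{1}$ and $\mathcal{E}^{*,1}_G$ is the unique invariant set there. Combined with the strict decrease of $V$ off $\mathcal{E}^{*,1}_G$, this yields local asymptotic stability of the desired consensus equilibrium, completing the proof.
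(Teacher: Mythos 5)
Your proposal is correct and follows essentially the same route as the paper's own proof: the identical energy-type Lyapunov function, the same cancellation yielding $\dot V=-\sum_i R_{\operatorname{diss}}(v_i,v_i)\le 0$, the same use of the $\tau_L$-sublevel bound to exclude non-identity critical points of $\Psi_G$ along edges, and the same leaf-pruning induction on the tree. The only difference is presentational — you invoke LaSalle's invariance principle explicitly and spell out the kernel argument for the semi-definite dissipation term, whereas the paper cites the manifold Lyapunov theorem and passes directly to the equilibrium set — but the underlying argument is the same.
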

\vspace*{-7mm}
\begin{proof}
	%For proving the stability of the consensus equilibrium $\mathcal{E}^{\operatorname{des},1}_G$, 
	See Appendix \ref{App:Proof_Th1}.
\end{proof}
\noindent Next, we consider the consensus in position and velocity of each agent. The problem objective is stated as:
\subsection*{Consensus objective II}
\vspace*{-3mm}
\begin{equation}\label{Eq:Consensus_Obj_2_Lie_Group}
\begin{split}
	\lim_{t \mapsto \infty} \gamma_i\left(t\right)=\gamma_j\left(t\right),	\lim_{t \mapsto \infty}v_i\left(t\right)=v_j\left(t\right),~\forall i,j\in \mathcal{V}.\\
\end{split}
\end{equation}
\hspace*{2mm}The total body velocity error for the $i$th agent is
\begin{equation}
\mathfrak{g} \ni v^{e}_i =\sum^{n}_{j=1}a_{ij} \left(v_i -\operatorname{Ad}_{\gamma^{-1}_i\gamma_j}\left(v_j\right)\right).
\end{equation}
%With the above definition of $v^{e}_i$, we define the 
A new error vector $\sigma_i\in \mathfrak{g}$ is defined, which is a linear combination of the velocity error $v^{e}_i $ and position error $\sum_{j=1}^{n}a_{ij}\left(\mathrm{T}_e \mathscr{L}_{\gamma_i}\right)^{*}\left(\operatorname{d}_i \Psi_{G,r}\left(\gamma_i,\gamma_j\right)\right)$ as
\begin{equation}
	\sigma_i =v^{e}_i + K_p \mathbb{I}^{\sharp} \Bigg(\sum_{j=1}^{n}a_{ij}\left(\mathrm{T}_e \mathscr{L}_{\gamma_i}\right)^{*}\left(\operatorname{d}_i \Psi_{G,r}\left(\gamma_i,\gamma_j\right)\right)\Bigg).
\end{equation}
The control input is given by
\begin{equation}\label{Eq:Control_input_Lie_group_Obj_2}
	\small 	\begin{split}
		&u_i = \frac{1}{\delta_i} \left( -\frac{1}{\alpha_i} \left( K_{p}\sum_{j=1}^{n}a_{ij} \left(\mathrm{T}_e\mathscr{L}_{\gamma^{-1}_j\gamma_i}\right)^{*}\left(\operatorname{d}\Psi_G\left(\gamma^{-1}_j\gamma_i\right)\right)\right.\right.\\
		&\left.\left. R^{\flat}_{\operatorname{diss}}\left(\sigma_i\right)\right)+\sum_{j=1}^{n}a_{ij} \mathbb{I}^{\flat}\left(\stackrel{\mathfrak{g}}{\nabla}_v \left(\operatorname{Ad}_{\gamma^{-1}_i\gamma_j}\left(v_j\right)\right)\right. \right.\\
		&\left. \left.+\left[\operatorname{Ad}_{\gamma^{-1}_i\gamma_j}\left(v_j\right),v_i\right] +\operatorname{Ad}_{\gamma^{-1}_i\gamma_j}\left(v'_j\right)\right)-\operatorname{ad}^{*}_{v_i}\mathbb{I}^{\flat}\left(v_i\right)\right.\\
		&\left. -K_p\frac{\operatorname{d}}{\operatorname{d}t} \left(\mathbb{I}^{\sharp} \sum_{j=1}^{n}a_{ij}\left(\mathrm{T}_e \mathscr{L}_{\gamma_i}\right)^{*}\left(\operatorname{d}_i \Psi_{G,r}\left(\gamma_i,\gamma_j\right)\right)\right) \right).
	\end{split}
\end{equation}
Substituting the control input in the dynamics of $\sigma_i$, we obtain
\begin{equation}\
%label{Eq:Red_Order_Dyn_Obj_2}
\small
\hspace*{-2mm}
\begin{split}
	\sigma'_i=-\frac{\mathbb{I}^{\sharp}}{\alpha_i}\left(K_{p} \sum_{j=1}^{n}a_{ij}\left(\mathrm{T}_e \mathscr{L}_{\gamma_i}\right)^{*}\left(\operatorname{d}_i \Psi_{G,r}\left(\gamma_i,\gamma_j\right)\right)+R^{\flat}_{\operatorname{diss}}\left(\sigma_i\right)\right).
\end{split}
\end{equation}
%The equilibrium of the closed-loop dynamics is
%\begin{equation}\label{Eq:Eq_set_Lie_group_2}
%\begin{split}
%\mathcal{E}^{2}_G&=\left\lbrace \xi\in G^n \times \mathfrak{g}^n \vert
%\sum_{j=1}^{n}a_{ij}\operatorname{d}_i\Psi_{G,r}\left(g_i,g_j\right)=0,\right.\\
%&\left.v_i=\operatorname{Ad}_{g^{-1}_ig_j}\left(v_j\right),~\forall i,j \in \mathcal{V}\right\rbrace.
%\end{split}
%\end{equation}
The desired consensus equilibrium of the closed-loop dynamics is 
\begin{equation}
\begin{split}
	\mathcal{E}^{*,2}_G&=\left\lbrace \xi \in G^{n} \times \mathfrak{g}^n \vert g_1=\dots=g_n,  v_1=\dots=v_n\right\rbrace.
\end{split}
\end{equation}
%Similarly, we define undesired consensus equilibria defined as $\mathcal{\bar{E}}^{2}_G=\mathcal{E}^{2}_G  \backslash \mathcal{E}^{*,2}_G$. The undesired consensus equilibria consist of trivial $\mathcal{\bar{E}}^{\operatorname{triv},2}_G$ and nontrivial equilibria $\mathcal{\bar{E}}^{\operatorname{nontriv},2}_G$ are
%\begin{equation*}
%\begin{split}
%	\mathcal{\bar{E}}^{\operatorname{triv},2}_G&=\left\lbrace \xi \in 	\mathcal{E}^{*,2}_G ~\vert ~ g^{-1}_j  \star g_i \in \mathcal{S}^{\operatorname{crit}}_G, v_i=\operatorname{Ad}_{g^{-1}_ig_j}\left(v_j\right),\right.\\
%	&\qquad\left.~\forall i,j \in \mathcal{V}\right\rbrace.
%\end{split}
%\end{equation*}
%and $\mathcal{\bar{E}}^{\operatorname{nontriv},2}_G=\mathcal{\bar{E}}^{2}_G \backslash \mathcal{\bar{E}}^{\operatorname{triv},2}_G $.

The next theorem establishes the asymptotic stability of 	$\mathcal{E}^{*,2}_G$ under the control input $u_i\left(t\right)$ given in \eqref{Eq:Control_input_Lie_group_Obj_2}.
\vspace*{-3mm}
\begin{theorem}\label{Th:Consensus_Lie_group_Obj_2}
	Let the communication graph $\mathcal{G}$ be connected tree and the initial conditions lie in the set 
\begin{equation}
		\begin{split}
		&\mathcal{S}^{2}=\left\lbrace \xi \in G^{n} \times \mathfrak{g}^{n} ~\vert~
	 K_pa_{ij}\Psi_{G,r}\left(g_i\left(0\right),g_j\left(0\right)\right)\right.\\
		&\left.+\frac{1}{2}\alpha_i\left\langle \left\langle \sigma_i\left(0\right),\sigma_i \left(0\right)\right\rangle \right\rangle_{\mathbb{I}} \le \tau_L,\forall i \in \mathcal{V},j \in \mathcal{N}_i\right\rbrace.
	\end{split}
\end{equation}
Then the proposed control input $u_i\left(t\right)$ given in \eqref{Eq:Control_input_Lie_group_Obj_2} renders the consensus equilibrium $	\mathcal{E}^{*,2}_G$ locally asymptotically stable.
\end{theorem}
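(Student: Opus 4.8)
The plan is to reproduce, in the sliding coordinate $\sigma_i$, the Lyapunov--LaSalle argument behind Theorem~\ref{Th:Consensus_Lie_group_Obj_1}. The decisive structural fact is that the closed-loop dynamics of $\sigma_i$ induced by \eqref{Eq:Control_input_Lie_group_Obj_2} coincide in form with the $v_i$-dynamics of \eqref{Eq:Consensus_Closed_loop_Dynamics_Lie_group}, with $\sigma_i$ replacing $v_i$; the feedforward term $-K_p\tfrac{\operatorname{d}}{\operatorname{d}t}(\cdots)$ in the controller is exactly what makes these dynamics clean. Guided by the set $\mathcal{S}^{2}$, I would take the edge-wise Lyapunov candidate $V=\sum_{i}\sum_{j\in\mathcal{N}_i}\bigl(K_p\,\Psi_{G,r}(\gamma_i,\gamma_j)+\tfrac12\alpha_i\langle\langle\sigma_i,\sigma_i\rangle\rangle_{\mathbb{I}}\bigr)$, which is proper, bounded below on the compact sublevel sets of $\Psi_G$, and vanishes exactly on $\mathcal{E}^{*,2}_G$ within the region cut out by $\mathcal{S}^{2}$.

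Writing $h_{ij}=\gamma_j^{-1}\gamma_i$, its body velocity is the relative velocity $v_i-\operatorname{Ad}_{\gamma_i^{-1}\gamma_j}(v_j)$, so the time-derivative formula for $\Psi_{G,r}$ yields $\tfrac{\operatorname{d}}{\operatorname{d}t}\Psi_{G,r}(\gamma_i,\gamma_j)=\bigl\langle(\mathrm{T}_e\mathscr{L}_{h_{ij}})^{*}\operatorname{d}\Psi_G(h_{ij});\,v_i-\operatorname{Ad}_{\gamma_i^{-1}\gamma_j}(v_j)\bigr\rangle$. Differentiating the kinetic part along $\sigma'_i$ and using $\langle\langle\eta,\mathbb{I}^{\sharp}\mu\rangle\rangle_{\mathbb{I}}=\langle\mu;\eta\rangle$, the gradient contribution $-K_p\bigl\langle\sum_j a_{ij}(\mathrm{T}_e\mathscr{L}_{\gamma_i})^{*}(\operatorname{d}_i\Psi_{G,r});\sigma_i\bigr\rangle$ is engineered to cancel the potential derivative, precisely because $\sigma_i=v^{e}_i+K_p\mathbb{I}^{\sharp}\bigl(\sum_j a_{ij}(\mathrm{T}_e\mathscr{L}_{\gamma_i})^{*}(\operatorname{d}_i\Psi_{G,r})\bigr)$. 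After this cancellation the principal gradient--velocity pairing vanishes and, for a single-edge aggregation, I am left with $\dot V=-\sum_i\langle R^{\flat}_{\operatorname{diss}}(\sigma_i);\sigma_i\rangle-K_p^{2}\sum_i\bigl\|\mathbb{I}^{\sharp}\bigl(\sum_j a_{ij}(\mathrm{T}_e\mathscr{L}_{\gamma_i})^{*}(\operatorname{d}_i\Psi_{G,r})\bigr)\bigr\|_{\mathbb{I}}^{2}\le0$; for agents of higher degree an additional cross-edge remainder survives, which is handled next.

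The exactness of this cancellation is the crux, and it is where the \emph{tree} hypothesis is essential. If an agent has degree larger than one, differentiating $\sigma_i$ introduces cross-edge products $\bigl\langle(\mathrm{T}_e\mathscr{L}_{h_{ij}})^{*}\operatorname{d}\Psi_G(h_{ij});\,v_i-\operatorname{Ad}_{\gamma_i^{-1}\gamma_k}(v_k)\bigr\rangle$ with $k\neq j$ that have no partner in the potential derivative. I would eliminate this obstruction by rooting the tree and orienting each edge toward its parent, so that every non-root agent's $\sigma_i$ reduces to the sliding variable of a single relative configuration $h_{i\pi(i)}$; the candidate $V$ then decouples into a sum of per-edge functions whose derivatives are each $-K_p^{2}\|\mathbb{I}^{\sharp}\operatorname{d}\Psi_G(h_{i\pi(i)})\|_{\mathbb{I}}^{2}-\langle R^{\flat}_{\operatorname{diss}}(\sigma_i);\sigma_i\rangle\le0$. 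Consistency of the edge bookkeeping rests on the antisymmetry $\operatorname{Ad}^{*}_{h_{ji}}\bigl(\operatorname{d}\Psi_G(h_{ij})\bigr)=-\operatorname{d}\Psi_G(h_{ji})$, which follows from the symmetry $\Psi_G(h)=\Psi_G(h^{-1})$ of the tracking error function.

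Finally I would invoke the manifold version of LaSalle's invariance principle. The set $\mathcal{S}^{2}$ keeps each per-edge energy below $\tau_L$, hence each $h_{ij}$ remains in the compact sublevel set $\Psi^{-1}_G(\le\tau_L)$ and $V$ is non-increasing on a compact invariant region. On the largest invariant set contained in $\{\dot V=0\}$ one has $R^{\flat}_{\operatorname{diss}}(\sigma_i)=0$ and $\mathbb{I}^{\sharp}\operatorname{d}\Psi_G(h_{i\pi(i)})=0$; invariance together with $\sigma'_i=0$ then forces $\operatorname{d}\Psi_G(h_{ij})=0$, i.e. $h_{ij}\in\mathcal{S}^{\operatorname{crit}}_G$. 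The definition of $\tau_L$ is decisive here: inside $\Psi^{-1}_G(\le\tau_L)$ the only critical point of $\Psi_G$ is $e$, so $h_{ij}=e$ and $\gamma_i=\gamma_j$; substituting back, $\sigma_i=0$ collapses to $v_i=v_j$. Connectivity of the tree then propagates equality of configurations and velocities across all agents, identifying the invariant set with $\mathcal{E}^{*,2}_G$ and yielding local asymptotic stability. I anticipate the two delicate points to be the cross-edge cancellation at high-degree nodes, handled by the acyclic structure, and the exclusion of nontrivial critical configurations, handled by the $\tau_L$ threshold encoded in $\mathcal{S}^{2}$.
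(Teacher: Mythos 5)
Your route --- a single global energy $V=\sum_i\sum_{j\in\mathcal N_i}\bigl(K_p\Psi_{G,r}(\gamma_i,\gamma_j)+\tfrac12\alpha_i\langle\langle\sigma_i,\sigma_i\rangle\rangle_{\mathbb I}\bigr)$ plus LaSalle, modeled on Theorem~\ref{Th:Consensus_Lie_group_Obj_1} --- is not the paper's route, and it has a genuine gap at precisely the point you yourself identify as the crux. You are right that at any node of degree larger than one, pairing the edge gradients with $\sigma_i$ produces cross-edge terms with no partner in the potential derivative. But your repair --- rooting the tree, orienting each edge toward its parent, and declaring that each non-root $\sigma_i$ ``reduces to'' the sliding variable of the single edge $h_{i\pi(i)}$ --- is not available to you. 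The variable $\sigma_i$ is not free bookkeeping: it is defined through $v^e_i$ and the gradient sum over \emph{all} neighbors $j$ with $a_{ij}=1$ of the undirected tree, and it is hard-wired into the controller \eqref{Eq:Control_input_Lie_group_Obj_2} whose stability the theorem asserts. Reorienting edges redefines $\sigma_i$, hence redefines $u_i$, hence analyzes a different closed-loop system. For the actual closed loop the cross terms survive: already for three agents on a path graph in the abelian case ($\operatorname{Ad}=\operatorname{id}$, $\Psi_G(x)=\tfrac12\vert x\vert^2$, $\mathbb I=\mathbf{I}$), a direct computation gives $\mathcal L_XV=-K_p\bigl((v_2-v_1)\cdot(x_2-x_3)+(v_2-v_3)\cdot(x_2-x_1)\bigr)-K_p^2(\cdots)-\sum_iR_{\operatorname{diss}}(\sigma_i,\sigma_i)$, and the first bracket is sign-indefinite. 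So your $V$ is not a Lyapunov function for this system, and the entire LaSalle argument downstream (which is otherwise fine: the $\tau_L$ threshold, exclusion of nontrivial critical points, propagation of equality over the tree) is conditional on a decrease that was never established.

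The paper avoids the obstruction by never summing over edges at all: it runs a cascade over the tree. Take a \emph{leaf} $k$ with sole neighbor $l$; because a leaf has degree one, $\sigma_k$ genuinely involves only the edge $(k,l)$, so the per-leaf energy $V_2=K_p\Psi_{G,r}(g_k,g_l)+\tfrac12\alpha_k\Vert\sigma_k\Vert^2_{\mathbb I}$ has a self-contained derivative $\mathcal L_XV_2=-K_p\Vert(\mathrm T_e\mathscr L_{\gamma_k})^{*}(\operatorname{d}_1\Psi_{G,r}(\gamma_k,\gamma_l))\Vert^2_{\mathbb I}-R_{\operatorname{diss}}(\sigma_k,\sigma_k)\le 0$ along the full coupled dynamics --- the neighbor's own closed-loop equation never enters, since both $\sigma'_k$ and $\tfrac{\operatorname d}{\operatorname dt}\Psi_{G,r}(\gamma_k,\gamma_l)$ are expressible in terms of $\sigma_k$ and that one edge gradient alone. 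This yields convergence of every leaf to its neighbor; the leaves are then pruned, the former interior nodes become leaves of the subgraph $\tilde{\mathcal G}$, and the argument recurses until the tree is exhausted. If you want to keep a global-energy formulation you must either dominate the cross terms (impossible in general, as the path-graph computation shows) or reproduce exactly this prune-and-recurse structure --- at which point you have rederived the paper's proof, not an alternative to it.
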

\vspace*{-7mm}
\begin{proof}
	See Appendix \ref{App:Proof_Th2}.
\end{proof}
\vspace*{-1mm}
\noindent The derived controllers in the previous theorems, there was no explicit way of specifying the final consensus value or trajectory. In the next section, we address this problem by proposing a consensus controller for reference trajectory tracking in which the agents converge to a specified value provided the information about reference value is available only to a subset of agents.
\vspace*{-1mm}
\subsection*{Consensus objective III}
\vspace*{-4mm}
\begin{equation}
	\begin{split}
		\lim_{t \mapsto \infty}  \gamma_i \left(t\right)=\lim_{t \mapsto \infty}  \gamma_r \left(t\right),~\lim_{t \mapsto \infty} v_i\left(t\right)=v_r \left(t\right).
	\end{split}
\end{equation}
where $\gamma_r()t$ is the twice differentiable reference trajectory with bounded velocity. Let the communication graph be directed; defining the total body velocity error for the $i-$th agent as
\begin{equation}
\mathfrak{g} \ni v^{e}_i =\sum^{n+1}_{j=1}\tilde{a}_{ij} \left(v_i -\operatorname{Ad}_{\gamma^{-1}_i\gamma_j}\left(v_j\right)\right).
\end{equation}
The new error vector $\sigma_i\in \mathfrak{g}$ is defined, as
\begin{equation}
	\sigma_i =v^{e}_i + K_p \mathbb{I}^{\sharp} \sum_{j=1}^{n+1}a_{ij}\left(\mathrm{T}_e \mathscr{L}_{\gamma_i}\right)^{*}\left(\operatorname{d}_i \Psi_{G,r}\left(\gamma_i,\gamma_j\right)\right).
\end{equation}
Next, we propose a controller with proportional, derivative and integral term as
\begin{equation}\label{Eq:Control_input_Lie_group_Obj_3}
	\begin{split}
		&u_i = \frac{1}{\delta^{\operatorname{in}}_i} \Bigg(
		-\frac{1}{\alpha_i} \Bigg[
		K_{p}\sum_{j=1}^{n+1}\tilde{a}_{ij} 
		\Big( \mathrm{T}_e\mathscr{L}_{\gamma^{-1}_j\gamma_i} \Big)^{*}
		\big( \operatorname{d}\Psi_G(\gamma^{-1}_j\gamma_i) \big) 
		\\
		&+ R^{\flat}_{\operatorname{diss}}(\sigma_i)
		\Bigg]  + \sum_{j=1}^{n+1} \tilde{a}_{ij} \Bigg(
		\mathbb{I}^{\flat}\Big(
		\stackrel{\mathfrak{g}}{\nabla}_v 
		\big( \operatorname{Ad}_{\gamma^{-1}_i\gamma_j}(v_j) \big)
		\\
		&+ \big[ \operatorname{Ad}_{\gamma^{-1}_i\gamma_j}(v_j), v_i \big]
		+ \operatorname{Ad}_{\gamma^{-1}_i\gamma_j}(v'_j)
		\Big)
		\Bigg) - \operatorname{ad}^{*}_{v_i}\mathbb{I}^{\flat}(v_i) \\
		& - K_p \frac{\operatorname{d}}{\operatorname{d}t} \Bigg(
		\mathbb{I}^{\sharp} \sum_{j=1}^{n+1} a_{ij}
		\Big( \mathrm{T}_e \mathscr{L}_{\gamma_i} \Big)^{*}
		\big( \operatorname{d}_i \Psi_{G,r}(\gamma_i,\gamma_j) \big)
		\Bigg)
		\Bigg).
	\end{split}
\end{equation}
where $K_p\in \mathbb{R}_{>0}$.  The closed-loop dynamics is 
\begin{equation}\label{Eq:Red_Order_Dyn_Obj_4}
\small
\hspace*{-5mm}
\begin{split}
	\sigma'_i=-\frac{\mathbb{I}^{\sharp}}{\alpha_i}\left(K_{p} \sum_{j=1}^{n+1}a_{ij}\left(\mathrm{T}_e \mathscr{L}_{\gamma_i}\right)^{*}\left(\operatorname{d}_i \Psi_{G,r}\left(\gamma_i,\gamma_j\right)\right)+R^{\flat}_{\operatorname{diss}}\left(\sigma_i\right)\right).
\end{split}
\end{equation}
% The equilibrium of the closed-loop dynamics is
%\begin{equation}\label{Eq:Eq_set_Lie_group_4}
%\begin{split}
%	\mathcal{E}^{3}_G&=\left\lbrace \xi \in G^n \times \mathfrak{g}^n\vert\sum_{j=1}^{n+1}\tilde{a}_{ij}\operatorname{d}_i\Psi_{G,r}\left(g_i,g_j\right)=0^{*}_{g_i},\right.\\
%	&\quad\left. v_i=\operatorname{Ad}_{g^{-1}_ig_r}\left(v_r\right), ~\forall i \in \mathcal{V}\right\rbrace.
%\end{split},
%\end{equation}
The desired consensus equilibrium is 
\begin{equation}
\begin{split}
		\mathcal{E}^{*,3}_G&=\left\lbrace \xi \in G^n \times \mathfrak{g}^n~\vert~g_1=\dots=g_r, 
 v_1=\dots=v_r\right\rbrace.
\end{split}
\end{equation}
%Similarly, we define undesired consensus equilibria  as $\mathcal{\bar{E}}^{3}_G=\mathcal{E}^{3}_G  \backslash \mathcal{E}^{*,3}_G$. The undesired consensus equilibria consist of trivial $\mathcal{\bar{E}}^{\operatorname{triv},3}_G$ and nontrivial equilibria $\mathcal{\bar{E}}^{\operatorname{nontriv},3}_G$ defined as
%\begin{equation}
%\begin{split}
%	\mathcal{\bar{E}}^{\operatorname{triv},3}_G&=\left\lbrace \xi \in\mathcal{\bar{E}}^{3}_G ~\vert ~ g^{-1}_r  \star g_i \in \mathcal{S}^{\operatorname{crit}}_G, \right.\\
%	&\quad\left.v_i=\operatorname{Ad}_{g^{-1}_ig_r}\left(v_r\right),~\forall i \in \mathcal{V}\right\rbrace.
%\end{split}
%\end{equation}
%and $\mathcal{\bar{E}}^{\operatorname{nontriv},3}_G=\mathcal{\bar{E}}^{3}_G\backslash 	\mathcal{\bar{E}}^{\operatorname{triv},3}_G $. 
The next theorem establishes the asymptotic stability of $	\mathcal{E}^{*,3}_G$ under the control input $u_i\left(t\right)$ given in \eqref{Eq:Control_input_Lie_group_Obj_3}.
\vspace*{-2mm}
\begin{theorem}\label{Th:Consensus_Lie_group_Obj_III}
		Let the directed communication graph $\mathcal{G}_{dir,~n+1}$ be a directed out-tree with reference trajectory as the root agent  all the remaining agents have a maximum in-degree of 1 and the initial conditions lie in the set and the initial conditions lie in the set 
\begin{equation}
\begin{split}
\mathcal{S}^{3}&=\left\lbrace \xi \in G^n \times \mathfrak{g}^n~\vert~ K_p \tilde{a}_{ij}\Psi_{G,r}\left(g_i\left(0\right),g_j\left(0\right)\right)\right.\\
	&\left.+\frac{1}{2}\alpha_i \left\langle \left\langle \sigma_i\left(0\right),\sigma_i \left(0\right)\right\rangle \right\rangle_{\mathbb{G}} \le \tau_L,\forall i \in \mathcal{V},j \in \mathcal{N}_i\right\rbrace.
\end{split}
\end{equation}
Then the proposed control input $u_i\left(t\right)$ given in \eqref{Eq:Control_input_Lie_group_Obj_3} renders the consensus equilibrium $\mathcal{E}^{\operatorname{des},3}_G$ locally asymptotically stable.
\vspace*{-1.5mm}
\end{theorem}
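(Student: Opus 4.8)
The plan is to prove local asymptotic stability via a Lyapunov construction together with LaSalle's invariance principle for smooth manifolds, following the template of Theorems~\ref{Th:Consensus_Lie_group_Obj_1} and \ref{Th:Consensus_Lie_group_Obj_2} but exploiting the directed out-tree structure. Because the graph is an out-tree with reference $\gamma_r$ as root and every non-root agent of in-degree one, each agent $i$ has a unique in-neighbour $j(i)$, and the functional in $\mathcal{S}^{3}$ points to the candidate
\[
V=\sum_{i}V_i,\qquad V_i=K_p\,\tilde a_{ij(i)}\,\Psi_{G,r}\big(\gamma_i,\gamma_{j(i)}\big)+\tfrac12\,\alpha_i\,\langle\langle\sigma_i,\sigma_i\rangle\rangle_{\mathbb{I}}.
\]
Since $\gamma_r(t)$ is time-varying, I would first pass to the edge error coordinates $(h_i,\sigma_i)$ with $h_i=\gamma^{-1}_{j(i)}\star\gamma_i$. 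The feedforward terms of \eqref{Eq:Control_input_Lie_group_Obj_3}---the $\stackrel{\mathfrak{g}}{\nabla}_v$, Lie-bracket, and $\operatorname{Ad}_{\gamma^{-1}_i\gamma_j}(v'_j)$ contributions---are designed to annihilate the in-neighbour's acceleration and the time dependence of the transport map, so that the edge state obeys the autonomous reduced-order law \eqref{Eq:Red_Order_Dyn_Obj_4}, with $h'_i$ fixed by the relative velocity and by $\sigma_i$. This decoupling is the structural payoff of the single-parent assumption: each $V_i$ depends only on its own edge state, so the problem reduces to $n$ independent autonomous error systems.

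The second step is to differentiate $V_i$ along the edge dynamics. Writing $\mu_i=(\mathrm{T}_e\mathscr{L}_{\gamma_i})^{*}(\operatorname{d}_i\Psi_{G,r})\in\mathfrak{g}^{*}$ for the embedded gradient and $w_i=v_i-\operatorname{Ad}_{\gamma^{-1}_i\gamma_{j(i)}}(v_{j(i)})\in\mathfrak{g}$ for the relative velocity, the time derivative of the potential equals $K_p\langle\mu_i;w_i\rangle$ by the expression for $\mathcal{L}_X\Psi_{G,r}$ established earlier, while compatibility of $\langle\langle\cdot,\cdot\rangle\rangle_{\mathbb{I}}$ with $\mathbb{I}^{\flat},\mathbb{I}^{\sharp}$ turns the kinetic derivative into $-\langle K_p\mu_i+R^{\flat}_{\operatorname{diss}}(\sigma_i);\sigma_i\rangle$. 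Substituting $\sigma_i=w_i+K_p\mathbb{I}^{\sharp}(\mu_i)$ cancels the indefinite cross term $K_p\langle\mu_i;w_i\rangle$ and leaves
\[
\dot V_i=-K_p^{2}\big\langle\mu_i;\mathbb{I}^{\sharp}(\mu_i)\big\rangle-R_{\operatorname{diss}}(\sigma_i,\sigma_i)\le 0,
\]
so each $V_i$ is nonincreasing; positive-definiteness of $\mathbb{I}^{\sharp}$ supplies extra damping along the potential gradient beyond the merely semi-definite dissipation.

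The third step is the invariance argument on each autonomous edge system. The per-edge bound in $\mathcal{S}^{3}$ gives $V_i(0)\le\tau_L$, and since $V_i$ is nonincreasing its sublevel set is compact and confines $h_i$ to the region where, by the definition of $\tau_L$, the identity $e$ is the only critical point of $\Psi_G$. On the largest invariant subset of $\{\dot V_i=0\}$ one has $\mu_i=0$ and $R_{\operatorname{diss}}(\sigma_i,\sigma_i)=0$; the first gives $\operatorname{d}\Psi_G(h_i)=0^{*}$, so the $\tau_L$-confinement forces $h_i=e$. Invariance keeps $h_i\equiv e$, whence $h'_i=0$ makes the relative velocity $w_i$ vanish and therefore $\sigma_i=w_i+K_p\mathbb{I}^{\sharp}(\mu_i)=0$. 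Thus every edge converges to $(h_i,\sigma_i)=(e,0)$, i.e.\ $\gamma_i\to\gamma_{j(i)}$ and $v_i\to v_{j(i)}$; propagating along the out-tree from the root $\gamma_r$ yields $\gamma_i\to\gamma_r$ and $v_i\to v_r$ for all $i$, identifying the invariant set with $\mathcal{E}^{*,3}_G$ and giving local asymptotic stability by LaSalle's principle for manifolds.

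The step I expect to be the main obstacle is verifying that the edge error dynamics are genuinely autonomous, namely that the feedforward block of \eqref{Eq:Control_input_Lie_group_Obj_3} exactly cancels $v'_{j(i)}$ and the time derivative of $\operatorname{Ad}_{\gamma^{-1}_i\gamma_{j(i)}}$, leaving no residual dependence of $(h_i,\sigma_i)$ on the moving reference. This requires careful bookkeeping of $\stackrel{\mathfrak{g}}{\nabla}_v$ and of the bracket $[\operatorname{Ad}_{\gamma^{-1}_i\gamma_j}(v_j),v_i]$ when differentiating the transported velocity, together with the chain-rule identity $(\mathrm{T}_e\mathscr{L}_{\gamma_i})^{*}\operatorname{d}_i\Psi_{G,r}=(\mathrm{T}_e\mathscr{L}_{h_i})^{*}\operatorname{d}\Psi_G(h_i)$ linking the embedded gradient to $\Psi_G$; once autonomy is secured, the single-parent structure prevents the cross-edge coupling that symmetric bilinearity handled automatically in the undirected case.
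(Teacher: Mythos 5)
Your proposal is correct and takes essentially the same route as the paper: the paper's proof (by invoking the procedure of Theorem~\ref{Th:Consensus_Lie_group_Obj_2}) uses exactly your per-edge Lyapunov function $K_p\Psi_{G,r}\left(\gamma_k,\gamma_l\right)+\tfrac{1}{2}\alpha_k\Vert\sigma_k\Vert^2_{\mathbb{I}}$ for a leaf $k$ with parent $l$, obtains the same cancellation of the indefinite cross term via the substitution $w_k=\sigma_k-K_p\mathbb{I}^{\sharp}\left(\mu_k\right)$ leaving a negative semi-definite derivative, and then propagates edge-wise convergence through the tree to the reference at the root. Your write-up is in fact more careful on two points the paper glosses over (the explicit LaSalle/$\tau_L$-confinement step that rules out the undesired critical points of $\Psi_G$, and the verification that the edge dynamics in the coordinates $\left(h_i,\sigma_i\right)$ are autonomous), but these are refinements of the same argument rather than a different approach.
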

\begin{proof}
	%For proving the stability of the consensus equilibrium $\mathcal{E}^{\operatorname{des},1}_G$, 
	See Appendix \ref{App:Proof_Th3}.
\end{proof}
%\noindent In the next section, the proposed consensus algorithms are numerically validated by designing an attitude consensus controller for multiple rigid bodies.
\section{Attitude consensus of multiple bodies}\label{Sec:Attitude_Consensus}
\noindent  In this section, we provide an example of attitude consensus of multiple agents modeled as rigid bodies. 
%	\subsection{Example: Attitude consensus of multiple rigid bodies}\label{Subsec:Attitude_Consensus}
\noindent The proposed consensus algorithm for Lie group is utilized for developing an attitude consensus controller for multiple rigid bodies, so that the orientation of all the rigid bodies is aligned to a common one. Consider an inertial frame $\left(\mathcal{O}_G,e_1,e_2,e_3\right)$ and body frame of $i$th agent $\left(\mathcal{O}^{i}_B,b^{i}_1,b^{i}_2,b^{i}_3\right)$. The attitude of a rigid body is represented by a rotation matrix, which is an element of the special orthogonal group $\mathrm{SO}\left(3\right) \triangleq \left\lbrace R \in \mathbb{R}^{3 \times 3}|~RR^T=\mathbf{I}_3,~\text{det}\left(R\right)=1 \right\rbrace$. The rotation matrix $R_i\left(t\right) \in \mathrm{SO}\left(3\right)$ represents the attitude of agent $i$ with respect to the  inertial frame $\mathcal{O}_G$. The body angular velocity of the agent $i$ is denoted by $\omega_i:\mathbb{R}\to \mathbb{R}^3$ given by $\mathfrak{so}\left(3\right) \ni \hat{\omega}_i\left(t\right) = R^\top_i\left(t\right)\dot{R}_i\left(t\right)$. The hat map $\hat{\cdot}:\mathbb{R}^3 \to \mathfrak{so}\left(3\right)$ defined as
$\hat{x}y=x \times y,~\forall x,y \in \mathbb{R}^3$, where `$\times$' is the cross product in $\mathbb{R}^3$. The dynamics of the agent $i$ is
\begin{equation}\label{Eq:Multi_agent_rigid_body_dyn}
	\begin{split}
		\dot{R}_i &= R_i\hat{\omega}_i\\
		\dot{\omega}_i+J^{-1}\left(\hat{\omega}_iJ\omega_i\right) &= J^{-1}u_i.
	\end{split}
\hspace*{-2mm}
\end{equation}

The term $J\in \mathbb{R}^{3 \times 3}$ represents the inertia matrix of the agent $i$ and the control input $u_i \in \mathbb{R}^3 \simeq \mathfrak{so}^{*}\left(3\right)$. Given the current attitude $R_i$ of agent $i$ and that of its neighbor $j$, $R_j$, we define the right attitude error as $R_{ij}\left(t\right)=R^\top_j\left(t\right)R_i\left(t\right)$. The right tracking error function is defined $\Psi_{\text{SO}\left(3\right),r}:\text{SO}\left(3\right)\times \text{SO}\left(3\right) \to \mathbb{R}$ so as $	\Psi_{\text{SO}\left(3\right),r}=\frac{1}{2}\operatorname{tr}\left(P\left(\mathbf{I}_3-R_{ij}\right)\right).$
, where $P=\operatorname{diag}\left\lbrace p_1,p_2,p_3\right\rbrace$ with $p_i \in \mathbb{R}_{>0}$. The tracking function $\Psi_{\text{SO}\left(3\right),r}$ has four critical points, which is the minimum number of critical points as allowed by the topology of the configuration space $\mathrm{SO}\left(3\right)$. The critical points of $\Psi_{\mathrm{SO}\left(3\right),r}$ are $R^{\operatorname{crit}}_{e,r}\in \mathcal{S}^{\operatorname{crit}}_{\mathrm{SO}\left(3\right)}=\left\lbrace \mathbf{I}_3,\operatorname{exp}\left(\pi \hat{e}_1\right),\operatorname{exp}\left(\pi \hat{e}_2\right),\operatorname{exp}\left(\pi \hat{e}_3\right)\right\rbrace$. The desired critical point is $\mathbf{I}_3$ and all other critical points are orientations obtained by a rotation of angle $\pi$ from the desired equilibrium point $R_j$ about $e_i,~\forall~ i \in \left\lbrace 1,2,3\right\rbrace$. The time derivative of $\Psi_{\text{SO}\left(3\right),r}$ is given by
\vspace*{-2mm}
\begin{equation}\label{Eq:Psi_SO_3_dot_exp}
	\begin{split}
		\frac{\operatorname{d}}{\operatorname{dt}}\Psi_{\mathrm{SO}\left(3\right),r}\left(R_i,R_j\right) =\left(\operatorname{skew}\left(P R_{ij}\right)^\vee\right)^\top\omega_{ij}
	\end{split}
\end{equation}
where $\mathbb{R}^3 \ni \omega_{ij}\left(t\right)=\omega_i\left(t\right)-\left(R_{ij}\right)^\top\omega_j$ represents right velocity error in the body frame of agent $i$. Comparing \eqref{Eq:Psi_{G,r}_dot_exp} and \eqref{Eq:Psi_SO_3_dot_exp}, we have $\left(\mathrm{T}_{\mathbf{I}_3}\mathscr{L}_{R_{ij}}\right)^{*}\left(\operatorname{d}\Psi_{\mathrm{SO}\left(3\right)}\left(R_{ij}\right)\right)=\operatorname{skew}\left(PR_{ij}\left(t\right)\right)^\vee.$ The control input $u_i$ for agent $i$ in order to achieve consensus %given in \eqref{Eq:Consensus_Obj_1_Lie_Group} 
can be derived from \eqref{Eq:Control_input_Consensus_Lie_group}
\begin{equation}\label{Eq:Control_input_Consensus_SO_3}
	\begin{split}
		u_i&=-\reallywidehat{\omega_i}J\omega_i-\frac{1}{\alpha_i} \left(K_{p}\sum_{j=1}^{n}a_{ij}\left(\operatorname{skew}\left(PR_{ij}\right)^\vee\right) +K_D\omega_i\right)
	\end{split}
\end{equation}
where $\alpha_i,K_{p}, K_D \in \mathbb{R}_{>0}$. From Theorem \ref{Th:Consensus_Lie_group_Obj_2}, the control input $u_i\left(t\right)$ to achieve the objective given in \eqref{Eq:Consensus_Obj_2_Lie_Group} is
\begin{equation}
	\begin{split}
	&u_i= \frac{1}{\delta_i} \left(-\reallywidehat{\omega_i}J\omega_i-\frac{K_{p}}{\alpha_i}\sum_{j=1}^{n}a_{ij}\left(\operatorname{skew}\left(PR_{ij}\right)^\vee\right) \right. \\
	& \left.-\frac{K_D}{\alpha_i}\sum_{j=1}^{n}a_{ij} \left(\omega_{ij}+K_P J^{-1}\left(\operatorname{skew}\left(PR_{ij}\right)^\vee\right)\right)\right.\\
    &\left.+\sum_{j=1}^{n}a_{ij}  J^{-1}\left(\frac{1}{2}\left(\omega \times \left(R^\top_{ij}\omega_j\right)\right) \right.\right.\\
	&\left.\left.+\frac{1}{2}J^{-1}\left(\omega_i\times \left(J\left(R^\top_{ij}\omega_j\right)\right)+\left(R^\top_{ij}\omega_j\right)\times \left(J\omega_i\right)\right)\right)^\vee \right.\\
	&\left.+J^{-1}\left(\reallywidehat{R^\top_{ij}\omega_j}\widehat{\omega_i}-\widehat{\omega_i}\reallywidehat{R^\top_{ij}\omega_j}+\reallywidehat{R^\top_{ij}\dot{\omega}_j}\right)^\vee \right.\\
	&\left.-\frac{K_P}{2} J^{-1}\sum_{j=1}^{n}a_{ij}\left(\operatorname{tr} \left(R^\top_{ij}P\right)\mathbf{I}_3-R^\top_{ij} P\right)\omega_{ij}\right).
	\end{split}
\end{equation}
Consider the problem objective of tracking a reference attitude trajectory as
\begin{equation}\label{Eq:Consensus_Obj_SO_3_4}
	\begin{split}
		\lim_{t \mapsto \infty} R_i\left(t\right)=R_r\left(t\right),~\lim_{t \mapsto \infty}\omega_i\left(t\right)=\omega_r\left(t\right).~\forall i\in \mathcal{V}
	\end{split}
\end{equation}
where $R_r\left(t\right)$ and $\omega_r\left(t\right)$ represents the desired attitude trajectory and angular velocity. The control input $u_i$ to achieve this objective is given by
\begin{equation}
	\begin{split}
		\small &u_i=\frac{1}{\delta^{\operatorname{in}}_i} \left(-\frac{K_{p}}{\alpha_i}\left(\sum_{j=1}^{n+1}\tilde{a}_{ij}\left(\operatorname{skew}\left(PR_{ij}\right)^\vee\right)\right)\right. \\
		&\left.-\frac{K_D}{\alpha_i}\sum_{j=1}^{n+1}\tilde{a}_{ij} \left(\omega_{ij}+K_P J^{-1}_i\left(\operatorname{skew}\left(PR_{ij}\right)^\vee\right) \right)-\reallywidehat{\omega_i}J_i\omega_i\right.\\
		&\left.+\sum_{j=1}^{n+1}\tilde{a}_{ij}  J^{-1}\left(\frac{1}{2}\left(\omega \times \left(R^\top_{ij}\omega_j\right)\right) \right.\right.\\
		&\left.\left.+\frac{1}{2}J^{-1}\left(\omega_i\times \left(J\left(R^\top_{ij}\omega_j\right)\right)+\left(R^\top_{ij}\omega_j\right)\times \left(J\omega_i\right)\right)\right)^\vee \right.\\
		&\left.+J^{-1}\left(\reallywidehat{R^\top_{ij}\omega_j}\widehat{\omega_i}-\widehat{\omega_i}\reallywidehat{R^\top_{ij}\omega_j}+\reallywidehat{R^\top_{ij}\dot{\omega}_j}\right)^\vee \right.\\
		&\left.-\frac{K_P}{2} J^{-1}\sum_{j=1}^{n+1}\tilde{a}_{ij}\left(\operatorname{tr} \left(R^\top_{ij}P\right)\mathbf{I}_3-R^\top_{ij} P\right)\omega_{ij}\right).
	\end{split}
\end{equation}
%The numerical validation of the proposed consensus algorithm is provided in the next section.
\section{Simulation Studies}\label{Sec:Numerical_simulation}
\noindent 	A numerical simulation of the orientation and angular velocity consensus for multiple rigid bodies is presented inhere. The inertia matrix for all agents are diagonal and the parameters are identical for all agents with $J_{xx}=0.23 ,~J_{yy}=0.28$ and $J_{zz}=0.35$. Let $\left(\phi_i,\theta_i,\psi_i\right)$ represents the Euler angle of the $i$th agent, $\forall i=\left\lbrace 1,2,3,4 \right\rbrace$. 
\subsubsection{Consensus on orientation and angular velocity}
The problem objective is given in \eqref{Eq:Consensus_Obj_1_Lie_Group} and the communication graph is shown in Figure \ref{Fig:Graph}. The initial attitude and angular velocities for all agents are given in Table \ref{Table:IC}. The gains are chosen as $K_{p}=1,~K_D=2$. The simulation results are shown in Figures \ref{Fig:Euler_angle_Obj_2}-\ref{Fig:Control_input_SO_3_Obj_2}. The time histories of the Euler angles and angular velocities of the four agents are shown in Figure \ref{Fig:Euler_angle_Obj_2} and \ref{Fig:omega_Obj_2} and it can be observed that Euler angles and angular velocities of the four agents achieve consensus within 2 s. In this case it can be observed that the final consensus trajectories in Euler angles are time varying.  The control inputs for the four agents are shown in Figure \ref{Fig:Control_input_SO_3_Obj_2} and can be concluded that after the initial transient, the control input remains bounded.
\subsection{Consensus to a time varying reference attitude}
The problem objective is given in \eqref{Eq:Consensus_Obj_SO_3_4}, where the objective is track a time varying attitude trajectory. The reference angular velocity trajectory is chosen as $\omega_r \left(t\right)=\left[ 10 \operatorname{sin} \left(\frac{2\pi}{8}t\right),10 \operatorname{cos} \left(\frac{2\pi}{8}t\right),10 \operatorname{sin} \left(\frac{2\pi}{8}t\right)\right] \circ /s $. The reference attitude trajectory $R_r \left(t\right)$ is generated as $\dot{R}_r \left(t\right)=R_r\left(t\right) \hat{\omega_r} \left(t\right)$. The communication graph for this case is shown in Fig. \ref{Fig:Graph_tree} and the gains are chosen as $K_{p}=20,~K_D=10$ . The simulation results are shown in Figures \ref{Fig:Euler_angle_Obj_4}-\ref{Fig:Control_input_SO_3_Obj_4}. Figure \ref{Fig:Euler_angle_Obj_4} shows the Euler angles of all the agents and these converge to the reference trajectory. The time histories of the components of the angular velocities are shown in Figure \ref{Fig:omega_Obj_4} and converge to very small values within 3 seconds. The control inputs are shown in Fig.~\ref{Fig:Control_input_SO_3_Obj_4}.

\begin{table}[!htbp]
	\caption{Initial agent orientations and angular velocities}
	\label{Table:IC}
	\centering
	\begin{tabular}{ccc}
		\hline
		Agent no & $\left[\phi\left(0\right), \theta\left(0\right),\psi\left(0\right)\right]^\top$ &  $\omega\left(0\right)$\\
		\hline
		1 & $\left[20^{\circ}, 20^{\circ}  ,20^{\circ}\right]^\top$ & $\left[1,1,1\right]^\top\circ/s$\\
		
		2 & $\left[30^{\circ}, 30^{\circ} ,30^{\circ}\right]^\top$ & $\left[2,2,2\right]^\top\circ/s$\\
		
		3 & $\left[50^{\circ}, 50^{\circ}  ,50^{\circ}\right]^\top$ & $\left[2,2,2\right]^\top\circ/s$\\
		
		4 & $\left[70^{\circ}, 70^{\circ}  ,70^{\circ}\right]^\top$ & $\left[2,2,2\right]^\top\circ/s$\\
		\hline
	\end{tabular}
\end{table}
\begin{figure}[!htbp]
	\centering
	\begin{minipage}[b]{0.2\textwidth}
	\includegraphics[width=0.8\textwidth]{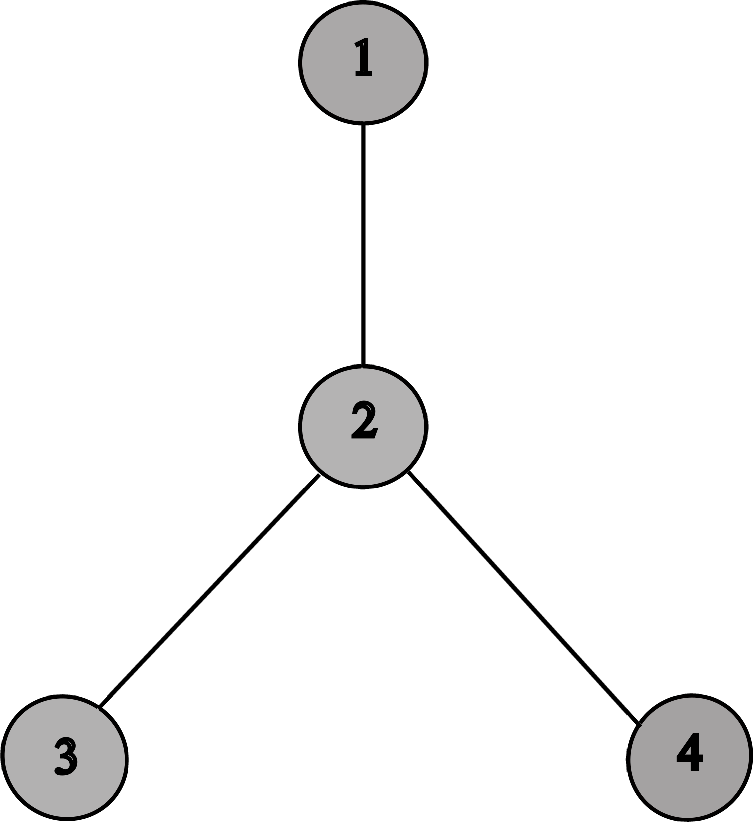}
	\caption{Communication graph $\mathcal{G}$}
	\label{Fig:Graph}
	\end{minipage}\hfill
	\begin{minipage}[b]{0.2\textwidth}
		\includegraphics[width=0.8\textwidth]{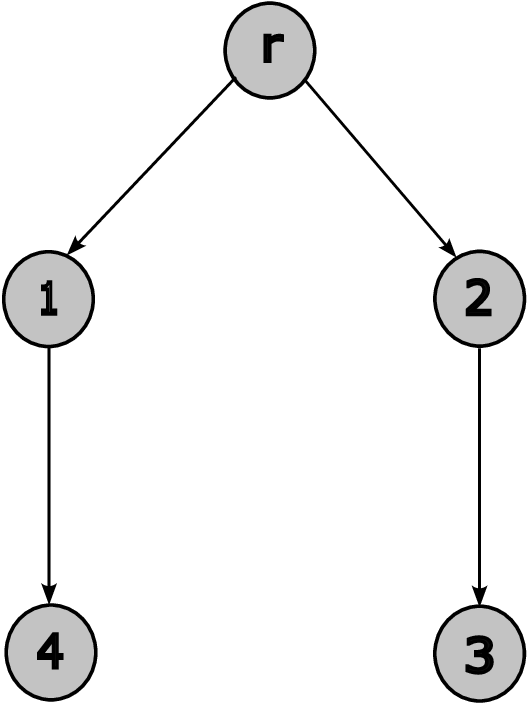}
		\caption{Connected tree communication graph }
		\label{Fig:Graph_tree}
	\end{minipage}
\end{figure}
\begin{figure}[!htbp]
	\centering
	\begin{minipage}[b]{0.5\textwidth}
		\includegraphics[width=0.85\textwidth]{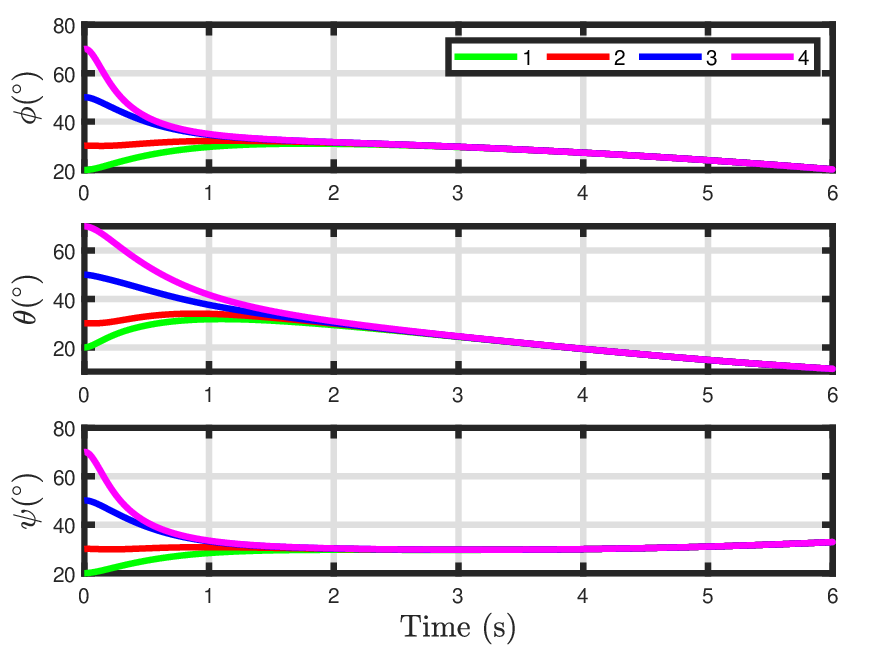}
		\caption{Euler angles responses}
		\label{Fig:Euler_angle_Obj_2}
	\end{minipage}
	\begin{minipage}[b]{0.5\textwidth}
		\includegraphics[width=0.85\textwidth]{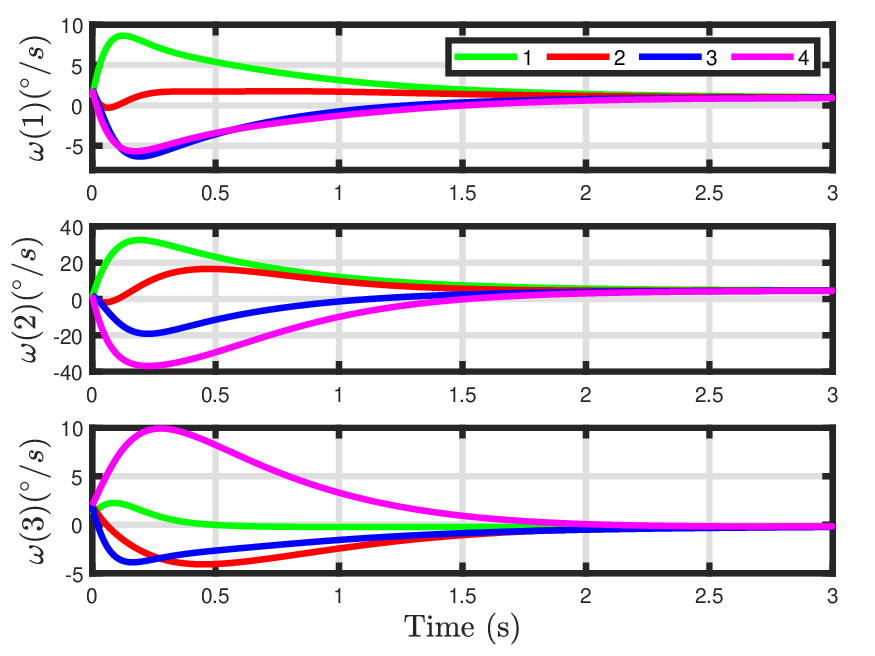}
		\caption{Angular velocities responses}
		\label{Fig:omega_Obj_2}
	\end{minipage}
    \begin{minipage}[b]{0.5\textwidth}
		\includegraphics[width=0.85\textwidth]{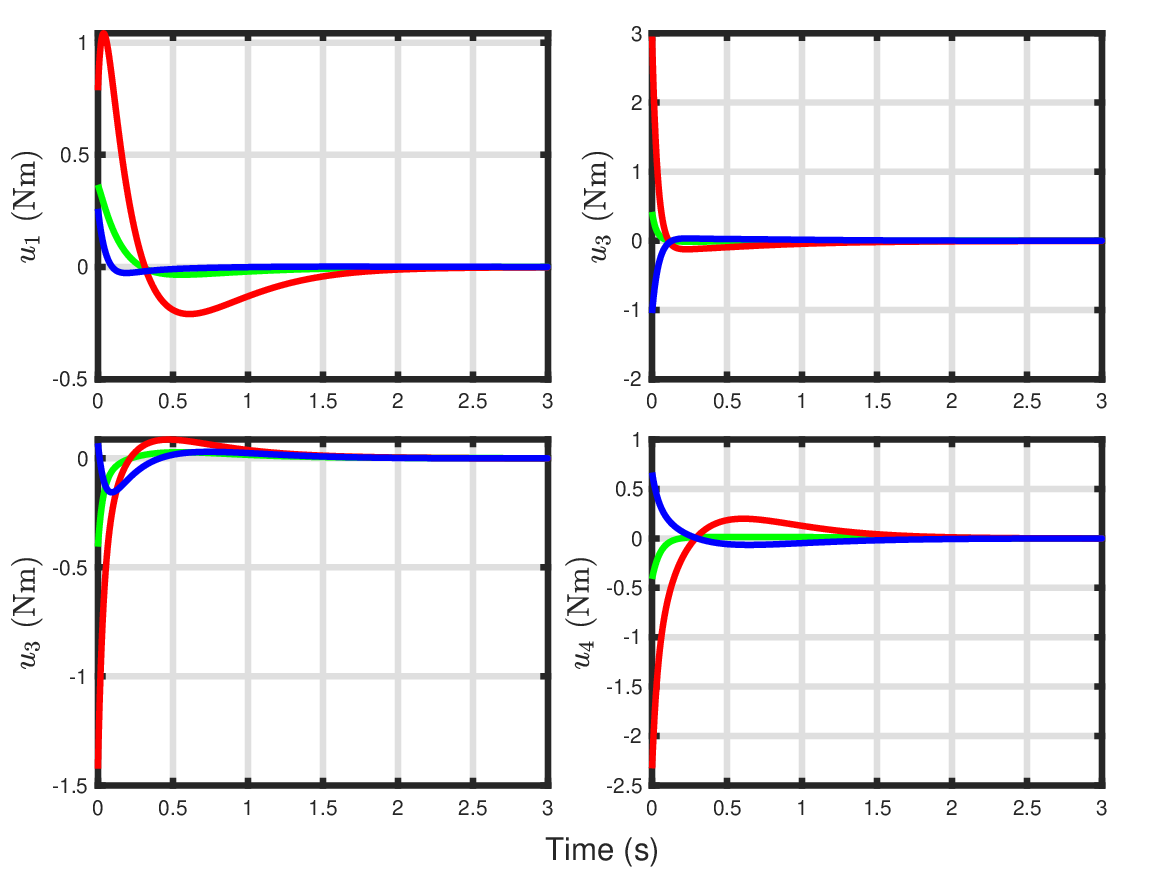}
		\caption{Control inputs responses}
		\label{Fig:Control_input_SO_3_Obj_2}
	\end{minipage}\hfill
		\begin{minipage}[b]{0.5\textwidth}
		\includegraphics[width=0.85\textwidth]{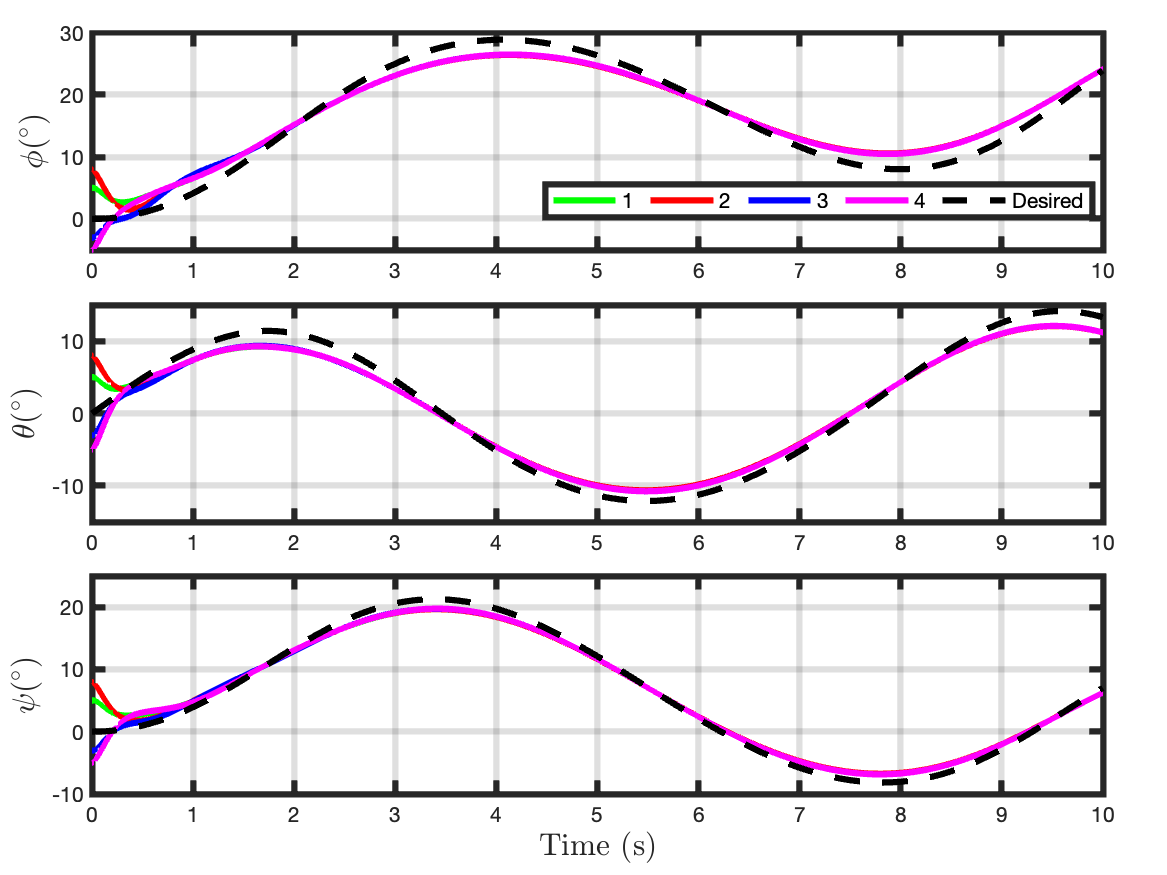}
		\caption{Euler angles responses}
		\label{Fig:Euler_angle_Obj_4}
	\end{minipage}\hfill
\end{figure}  
\begin{figure}[!htbp]
	\begin{minipage}[b]{0.5\textwidth}
	\includegraphics[width=0.95\textwidth]{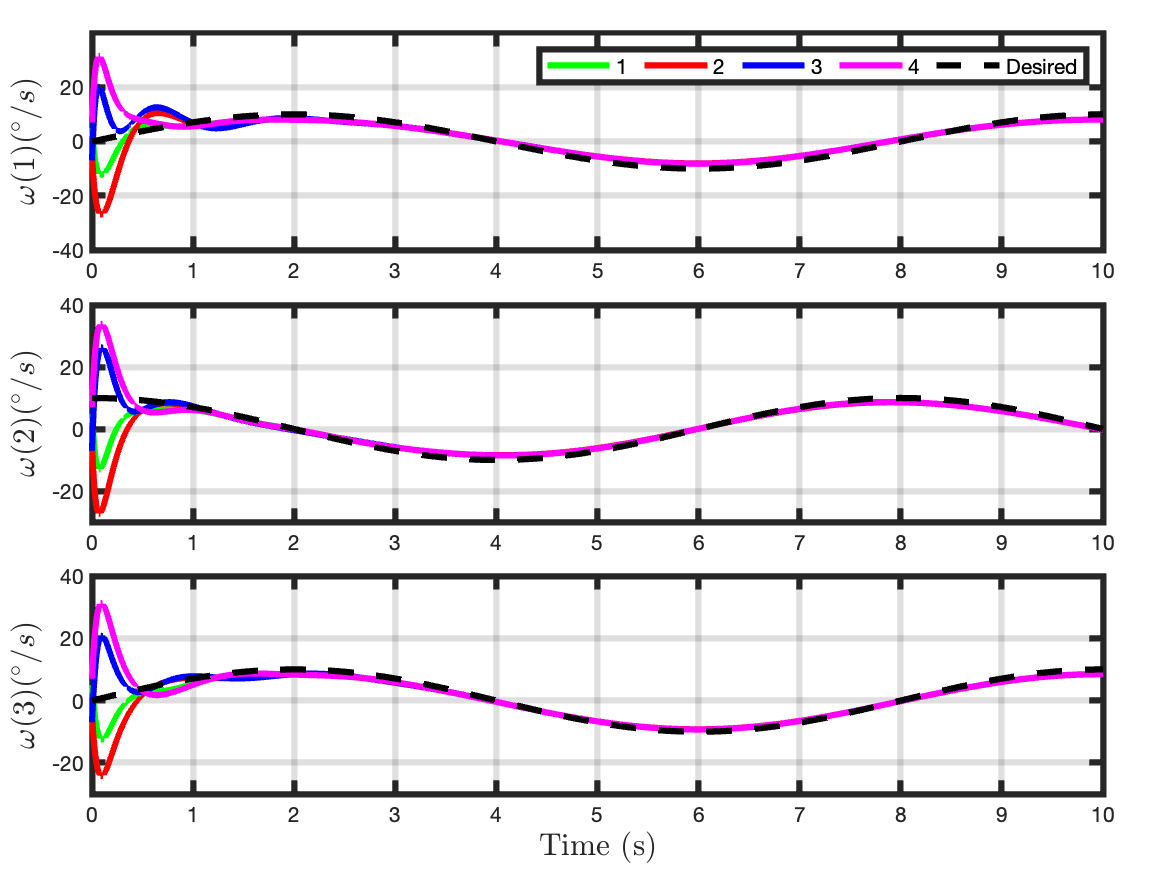}
	\caption{Angular velocities responses}
	\label{Fig:omega_Obj_4}
	\end{minipage}\hfill
	\begin{minipage}[b]{0.5\textwidth}
		\includegraphics[width=0.95\textwidth]{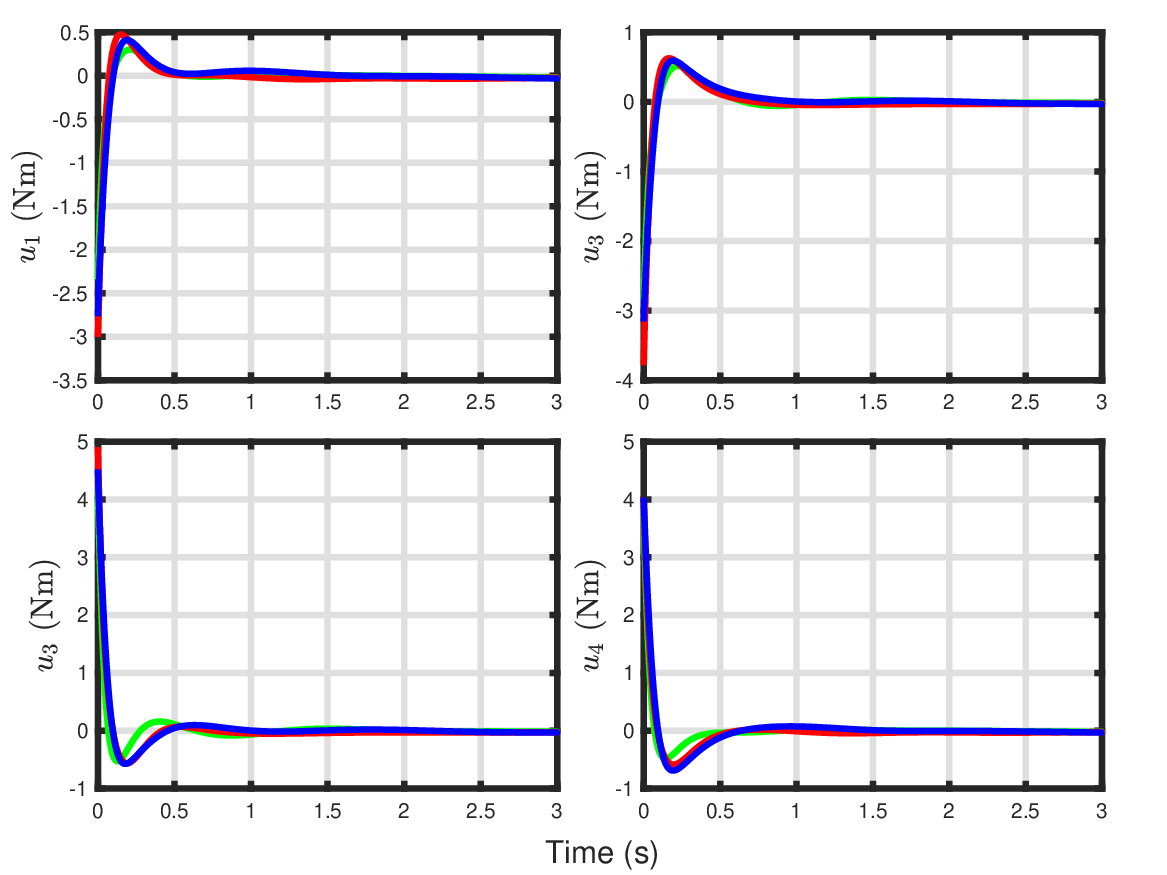}
		\caption{Control inputs responses}
		\label{Fig:Control_input_SO_3_Obj_4}
	\end{minipage}
\end{figure}  
\section*{Conclusions\label{Sec:Conclusion}}
\noindent 	In this paper, a general framework for designing consensus algorithm for multi-agent simple mechanical control systems evolving on a compact Lie group is proposed. A tracking error function is proposed for defining the error in position between two agents and adjoint map is used for comparing the velocities in the Lie algebra. The proposed controllers enable the communicating agents to achieve position as well as velocity consensus. In addition, controllers are proposed so that the agents track a predefined reference trajectory. Using Lyapunov theory for smooth manifold, local asymptotic stability of the consensus equilibrium is proved for both undirected and directed communication graphs.  The size of the region of convergence can be adjusted by changing the control parameters. The connectivity of the communication graph plays an important role in achieving consensus. The closed-loop dynamics resembles the structure of a second-order Laplacian flow for consensus in Euclidean space. Finally, the consensus algorithms are numerically validated by designing attitude consensus of rigid bodies. The future work plan is to extend the consensus algorithm to a general connected graph with switching networks.
 \section*{Acknowledgments}
 This work was supported by the NYUAD Center for Artificial Intelligence and Robotics (CAIR), funded by Tamkeen under the NYUAD Research Institute Award CG010.     
 \section*{Appendices}
 \subsection*{A. Proof of Theorem \ref{Th:Consensus_Lie_group_Obj_1}}{\label{App:Proof_Th1}}
 \begin{proof}
 	Consider a candidate Lyapunov function $V_1 \in C^{\infty}\left(\mathrm{T}G^{\operatorname{n}}\right)$ defined as
 	\begin{equation}
 		V_1=\frac{1}{2}K_{p}\sum_{i=1}^{n}\sum_{j=1}^{n}a_{ij}\Psi_{G,r}\left(g_i,g_j\right)+\frac{1}{2}\sum_{i=1}^{n}\alpha_i\left\langle \left\langle v_i,v_i \right\rangle \right\rangle_{\mathbb{I}},
 	\end{equation}
 	where $\left\langle\left\langle , \right\rangle\right\rangle_{\mathbb{I}}$ is the inner product on $\mathfrak{g}$ induced by the inertia $\mathbb{I}$. It can be easily verified that the candidate Lyapunov function $V_1$ is positive-definite about the consensus equilibrium $\mathcal{E}^{*,1}_G$. Let $X\in \Gamma^{\infty}\left(\mathrm{T}G^n\right)$ be the smooth vector field representing the closed-loop dynamics \eqref{Eq:Consensus_Closed_loop_Dynamics_Lie_group}. The Lie derivative of $V_1$ along \eqref{Eq:Consensus_Closed_loop_Dynamics_Lie_group}
 	\begin{equation}
 		\small	\begin{split}
 			&\mathcal{L}_XV_1
 			=\frac{1}{2}K_{p}\sum_{i=1}^{n}\sum_{j=1}^{n}a_{ij}\left(\left\langle \operatorname{d}_i \Psi_{G,r} \left(\gamma_i,\gamma_j\right); \gamma'_i \right \rangle \right.\\
 			&\left.+ \left\langle \operatorname{d}_j \Psi_{G,r} \left(\gamma_i \right),\gamma_j ;\gamma'_j\right\rangle \right)+\sum_{i=1}^{n}\alpha_i\left\langle \left\langle v_i,\mathbb{I}^{\sharp}_i \left(\operatorname{ad}^{*}_{v_i}\mathbb{I}^{\flat}_i\left(v_i\right)+u_i\right) \right \rangle \right\rangle	.
 		\end{split}
 	\end{equation} 
 	Since $\Psi_{G,r}$ is symmetric and we have $a_{ij}=a_{ji}$, the expression of $\mathcal{L}_XV_1$ can be simplified to
 	\begin{align}
 		&\mathcal{L}_X V_1=K_{p}\sum_{i=1}^{n}\sum_{j=1}^{n}a_{ij}\left(\left\langle \operatorname{d}_i \Psi_{G,r} \left(\gamma_i,\gamma_j\right); \gamma'_i \right \rangle \right)\nonumber \\
 		&+\sum_{i=1}^{n}\alpha_i\left\langle \left\langle v_i,\mathbb{I}^{\sharp}_i \left(\operatorname{ad}^{*}_{v_i}\mathbb{I}^{\flat}_i\left(v_i\right)+u_i\right) \right \rangle \right\rangle_{\mathbb{I}} \\
 		& =K_{p}\sum_{i=1}^{n}\sum_{j=1}^{n}a_{ij}\left(\left\langle \operatorname{d}_i \Psi_{G,r} \left(\gamma_i,\gamma_j\right); \mathrm{T}_e\mathscr{L}_{\gamma_i}\left(v_i\right) \right \rangle \right)\nonumber \\
 		&+\sum_{i=1}^{n}\alpha_i\left\langle \left\langle v_i,\mathbb{I}^{\sharp}_i \left(\operatorname{ad}^{*}_{v_i}\mathbb{I}^{\flat}_i\left(v_i\right)+u_i\right) \right \rangle \right\rangle_{\mathbb{I}}\\
 		&=K_{p}\sum_{i=1}^{n}\sum_{j=1}^{n}a_{ij}\left(\left\langle \left(\mathrm{T}_e \mathscr{L}_{\gamma_i }\right)^{*}\left(\operatorname{d}_i \Psi_{G,r}\left(\gamma_i,\gamma_j\right)\right);v_i \right \rangle \right)\nonumber \\
 		&+\sum_{i=1}^{n}\alpha_i\left\langle \left\langle v_i,\mathbb{I}^{\sharp}_i \left(\operatorname{ad}^{*}_{v_i}\mathbb{I}^{\flat}_i\left(v_i\right)+u_i\right) \right \rangle \right\rangle_{\mathbb{I}}.
 	\end{align}
 	Substituting the control input $u_i$ from \eqref{Eq:Control_input_Consensus_Lie_group} in the expression of $\mathcal{L}_X V_1$, we obtain $	\mathcal{L}_X V_1=-\sum_{i=1}^{n}R_{\operatorname{diss}}\left(v_i,v_i\right) \le 0$. By Lyapunov stability criteria for smooth manifolds \cite{FB-ADL:04} (Theorem 6.14), it can concluded that the consensus equilibrium $\mathcal{E}^{*,1}_G$ is stable. Also, let $V_1\left(t=0\right)=L_1 <\frac{K_P}{2}\tau_L$. Since $V_1\left(t\right)$ is a non-increasing function of time, the curve $t \mapsto \left(\gamma_i\left(t\right),\gamma_j\left(t\right)\right)$ lies in the set $\Psi^{-1}_{G,r}\left(\le L_1\right)$  for all $i,j \in \mathcal{V}$. Suppose the communication graph $\mathcal{G}$ is a connected tree. Let $l_i,~ i \in \left\lbrace 1,...,p\right\rbrace $ be a leaf of the tree  connected to the agent $\mathcal{N}_{l_{i}}$. Then by definition, the degree of $l_i$ is $1$ and $\mathcal{N}_{l_{i}}$ is the sole neighbor of $l_i$. Then, the non desired equilibrium points for agent $l_i$ is characterized by
 	\begin{equation}
 		\begin{split}
 			a_{l_i\mathcal{N}_{l_{i}}}\operatorname{d}_1\Psi_{G,r}\left(\gamma_{l_i}\left(t\right),\gamma_{\mathcal{N}_{l_i}}\left(t\right)\right)&=0^{*}_{\gamma_{l_i}\left(t\right)}
 			,~v_{l_i}\left(t\right)=0.
 		\end{split}
 	\end{equation}
 	Since $a_{l_i\mathcal{N}_{l_{i}}}\neq0$, we have for agent $l_i$ $\gamma^{-1}_{\mathcal{N}_{l_{i}}}\left(t\right)\star \gamma_{l_i}\left(t\right)\in \mathcal{S}^{\operatorname{crit}}_G$. Also, the initial conditions lie in the set $\mathcal{S}^1$ and the set $\mathcal{S}^1$ is positively invariant, it can be concluded that the position of both leaf and its sole neighbor $\mathcal{N}_{l_{i}}$, i.e, $\gamma_{l_i}\left(t\right) =\gamma_{\mathcal{N}_{l_{i}}}\left(t\right),~ \text{as}~ t \mapsto \infty$. Thus the positions of all leaves connected to $\mathcal{N}_{l_{i}}$ becomes identical. Consider another set of leaves $m_j,~ j \in \left\lbrace 1,...,q\right\rbrace $ connected to the agent $\mathcal{N}_{m_j}$. Following the above steps, it can be concluded that $\gamma_{m_j}\left(t\right) =\gamma_{\mathcal{N}_{m_{j}}}\left(t\right),~ \text{as}~ t \mapsto \infty$. Thus, the gradients associated with all the leaves vanishes. Consider a new subgraph $\tilde{\mathcal{G}}$ discarding the leaves of the graph $\mathcal{G}$. The agents $\mathcal{N}_{l_i} ,\mathcal{N}_{m_j},...$ become the leaves of the new subgraph $\tilde{\mathcal{G}}$. Following the above procedure, we end up in an empty graph in a finite number of steps and for any pair of agents $\gamma_i \left(t\right) =\gamma_j \left(t\right), \text{as}~ t \mapsto \infty,~\forall i,j \in \mathcal{V}$.
 \end{proof}
 \subsection{Proof of Theorem \ref{Th:Consensus_Lie_group_Obj_2}}{\label{App:Proof_Th2}}
 \begin{proof}
 	Let agent $k$ be a leaf connected to agent $l$. The closed-loop dynamics of $k-$ th agent is given by
 	\begin{equation}
 		\sigma'_k=-\frac{1}{\alpha_k} \mathbb{I}^{\sharp} \left(K_p \left(\mathrm{T}_e \mathscr{L}_{\gamma_k }\right)^{*} \left(\operatorname{d}_1 \Psi \left(\gamma_k,\gamma_l\right)\right)+R^{\flat}_{\operatorname{diss}}\left(\sigma_k\right) \right).
 	\end{equation}
 	The objective is to prove that $\gamma_k \left(t\right) \mapsto \gamma_l \left(t\right) $ and $\gamma'_k \left(t\right) \mapsto \gamma'_l \left(t\right)$ as $t \mapsto \infty$. Consider a candidate Lyapunov function $V_2: G \times \mathfrak{g} \to \mathbb{R}_{>0}$ as
 	\begin{equation}
 		V_2=K_p\Psi_{G,r} \left(g_k,g_l\right)+\frac{1}{2} \alpha_k\Vert\sigma_k \Vert^2_{\mathbb{I}.}
 	\end{equation}
 	Taking the time derivative of $V_2$
 	\begin{align}
 		&\mathcal{L}_X V_2 = K_p\left\langle \left(\mathrm{T}_e \mathscr{L}_{\gamma_k }\right)^{*} \left(\operatorname{d}_1 \Psi_{G,r} \left(\gamma_k,\gamma_l\right)\right) ;\sigma_k\right.\nonumber \\
 		&\left.- K_p\left(\mathrm{T}_e \mathscr{L}_{\gamma_k }\right)^{*} \left(\operatorname{d}_1 \Psi_{G,r} \left(\gamma_k,\gamma_l\right)\right)\right\rangle+\alpha_k \left\langle \left\langle \sigma_k,\sigma'_k\right\rangle\right\rangle\\
 		&=-K_p\Vert\left(\mathrm{T}_e \mathscr{L}_{\gamma_k }\right)^{*} \left(\operatorname{d}_1 \Psi_{G,r} \left(\gamma_k,\gamma_l\right)\right) \Vert^2_{\mathbb{I}}-R_{\operatorname{diss}} \left(\sigma_k,\sigma_k\right) .
 	\end{align}
 	Since, the initial conditions lie in the set $\mathcal{S}^2$, $\mathcal{L}_X V_2<0$ and it can be concluded that $\gamma_k \left(t\right) \mapsto \gamma_l \left(t\right)$ as $t \mapsto \infty$. Similarly, it can be shown that trajectories of all leaves connected to agent $l$ converges. Following the same procedure as in Theorem \ref{Th:Consensus_Lie_group_Obj_2}, by removing all the leaves and forming a new subgraph $\tilde{\mathcal{G}}$, it can be concluded that $\gamma_i \left(t\right) \mapsto \gamma_j \left(t\right)$ and $\gamma'_i \left(t\right) \mapsto \gamma'_j \left(t\right)$ as $t \mapsto \infty$. Thus, the desired equilibrium $\mathcal{E}^{*,2}_G$ is locally asymptotically stable.
 \end{proof}
 \subsection{Proof of Theorem \ref{Th:Consensus_Lie_group_Obj_III}}{\label{App:Proof_Th3}}
 \begin{proof}
 	Let agent $k$ be a leaf connected to agent $l$. The closed-loop dynamics of $k-$ th agent is given by
 	\begin{equation}
 		\sigma'_k=-\frac{1}{\alpha_k} \mathbb{I}^{\sharp} \left(K_p \left(\mathrm{T}_e \mathscr{L}_{\gamma_k }\right)^{*} \left(\operatorname{d}_1 \Psi \left(\gamma_k,\gamma_l\right)\right)+R^{\flat}_{\operatorname{diss}}\left(\sigma_k\right) \right).
 	\end{equation}
 	By following the procedure followed in Theorem \ref{App:Proof_Th2}, it can be proved that $\gamma_k \left(t\right) \mapsto \gamma_l \left(t\right) $ and $\gamma'_k \left(t\right) \mapsto \gamma'_l \left(t\right)$ as $t \mapsto \infty$ and can be concluded that $\gamma_i \left(t\right) \mapsto \gamma_r \left(t\right)$ and $\gamma'_i \left(t\right) \mapsto \gamma'_r \left(t\right)$ as $t \mapsto \infty$. Thus, the desired equilibrium $\mathcal{E}^{*,3}_G$ is locally asymptotically stable.
 \end{proof}                            
\bibliographystyle{IEEEtran}
\bibliography{IEEEabrv,References}
\end{document}